\newtheorem{theorem}{Theorem}[section]
\newtheorem{example}[theorem]{Example}
\newcommand{\comment}[1]{}     
\begin{document}
\title{The role of short-term immigration on disease dynamics: An SIR model with age-structure}
\maketitle

\begin{center}
\author{Fabio Sanchez\footnote{Corresponding author: \url{fabio.sanchez@ucr.ac.cr}\\
Present address: Centro de Investigaci\'on en Matem\'atica Pura y Aplicada (CIMPA), Escuela de Matem\'atica, Universidad de Costa Rica. San Pedro de Montes de Oca, San Jos\'e, Costa Rica, 11501.} and
Juan G. Calvo\footnote{Centro de Investigaci\'on en Matem\'atica Pura y Aplicada (CIMPA), Escuela de Matem\'atica, Universidad de Costa Rica. San Pedro de Montes de Oca, San Jos\'e, Costa Rica, 11501. Email: \url{juan.calvo@ucr.ac.cr}}}
\end{center}




\begin{abstract}
\noindent We formulate an age-structured nonlinear partial differential equation model that features {\it short-term immigration} effects in a population. Individuals can immigrate into the population as any of the three stages in the model: {\it susceptible}, {\it infected} or {\it recovered}. Global stability of the {\it immigration-free} and {\it infection-free} equilibria is discussed. A generalized numerical framework is established and specific {\it short-term immigration} scenarios are explored.
\end{abstract}






\section{Introduction} \label{sec:intro}
Propagation of infectious diseases due to immigration can have adverse effects in a population. In the last decade, numerous events around the world \cite{HondCaravan,Nica,Vene,Vene2,eurocrisis,prcrisis} forced individuals to find themselves in a desperate situation where migration to other countries is the only viable alternative for surviving.

An expected consequence of {\it short-term immigration} is being an economic and public health burden in a population. These population movements, sometimes massive \cite{HondCaravan}, can cause financial burden, overcrowding, cultural clashes, among other issues if the country receiving them is not prepared to do so, or if it is incapable of sustaining such influx of individuals. We present an epidemiological approach with age-structure that takes into account {\it short-term immigration} and the effects it may have on a population regarding disease dynamics. 

Traditionally, the central objective in epidemic models is the study of the {\it the basic reproductive number}, $\mathcal{R}_0$, which represents the number of secondary infections caused by a infectious individual when introduced into a mostly susceptible population. 
Typically, when $\mathcal{R}_0>1$, the disease prevails and the endemic equilibrium becomes asymptotically stable. The possibility of infected individuals entering a (susceptible) population can lead to an endemic state ($\mathcal{R}_0>1$) and cause an outbreak in a population. Moreover, having $\mathcal{R}_0<1$ becomes a very difficult task. These situations bring enormous challenges to public health officials when these \lq\lq bursts\rq\rq of individuals enter a population in droves. 
We will introduce a general numerical approach and will explore numerical simulations under different scenarios with special circumstances, based on the aforementioned recent world events and the approach by \cite{franceschetti2008}.

Previous work on age-structured models focuses on studying the long term dynamics of the system, existence and stability of its equilibria and strategies around the {\it basic reproductive number} \cite{allen2008mathematical,holmes1994partial,kribs2002,martcheva2003progression,shim2006,sanchez2018x}. The age-structured model by \cite{franceschetti2008} includes permanent immigration into the population. In contrast, we explore an age-structured model that incorporates {\it short-term immigration} effects that can lead to a population having a large outbreak caused by the introduction of an infectious disease foreign to the local population, or an absent disease where the population has some level of immunity.

The manuscript is organized as follows. In Section \ref{sec:data}, we describe some of the recent events that have led to large influx of migrants. In Section \ref{sec:model}, we describe the partial differential equation model. In Section \ref{sec:freesteadyState}, we study the {\it immigration-free} and {\it infection-free} non-uniform steady state distribution, its stability, and define the {\it basic reproductive number}, $\mathcal{R}_0$, when there is no immigration. Then, a general numerical framework is presented in Section \ref{sec:exp} that explores various immigration scenarios and its effect on a presumably susceptible population. Lastly, in Section \ref{sec:conc} we discuss our results of the model.

\section{Recent world events} \label{sec:data}
Recent world events have caused populations to migrate to other countries looking for safety, jobs, or simply survival \cite{HondCaravan,Nica,Vene,Vene2,eurocrisis,prcrisis}. In 2017, one of the worst natural disasters to hit Puerto Rico, category 5 hurricane Maria, destroyed the country, leaving a death-toll of close to 3000 people and many more homeless and without jobs \cite{deaths}. It is estimated that 14\% of the population will leave the country between 2017-2019 \cite{prcrisis}. Most people end up migrating to the United States, primarily because Puerto Ricans are born US citizens. In Figure \ref{fig:pr2usa}, we illustrate the post-Maria exodus to the United States. 
\begin{figure}[h]
\centering
    \includegraphics[width=\textwidth]{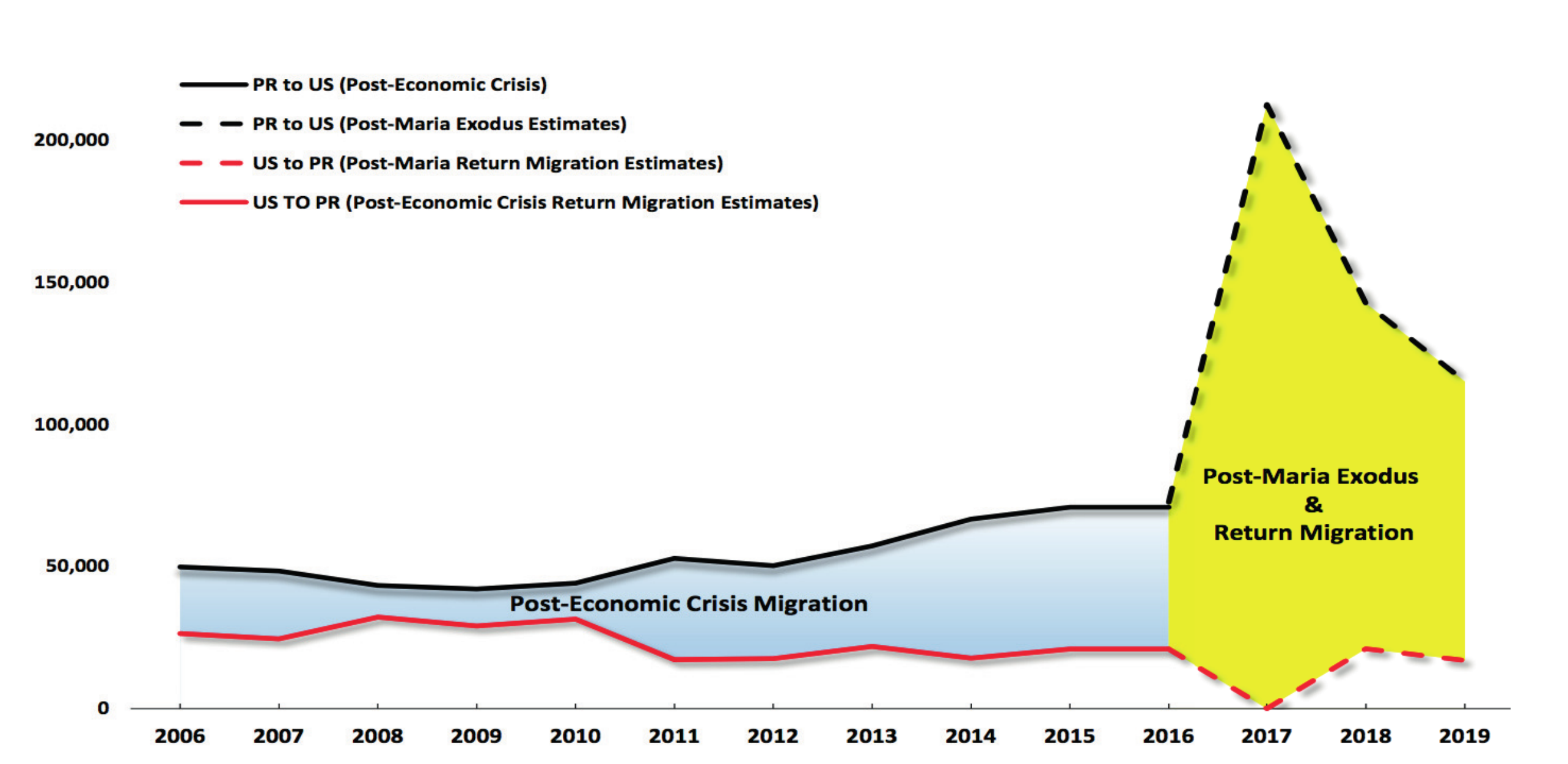}
    \caption{Post-Maria migration from Puerto Rico to the United States \cite{prcrisis,acs}. \label{fig:pr2usa}}
\end{figure}
The majority of these individuals that go to the United States move to the state of Florida \cite{prcrisis}. This particular state has all the proper characteristics for the {\it Aedes aegypti} to thrive, the mosquito that transmits diseases like dengue, chikungunya, and Zika \cite{cdc}. In Puerto Rico, these diseases have circulated in the past years \cite{cdcpr}, and infected individuals could potentially enter the United States and cause an outbreak in a population where these diseases have not caused problems previously.

In Venezuela the economic instability has caused millions of individuals to migrate to countries like Colombia, Peru, some parts of Central America, the United States and Europe. It is estimated that approximately 2.3 million Venezuelans now live abroad \cite{Vene,Vene2}. Another example is the Nicaraguan crisis because of government instability. In 2018, 24 000 people from Nicaragua have formally requested asylum in Costa Rica \cite{Nica}. The large number of people that are fleeing Nicaragua have caused cultural clashes, overcrowding in some areas and, most importantly, public health officials are under alert from imported malaria cases \cite{Nica2}.

More recently, in 2018 a large number of Honduran immigrants are attempting to enter the United States as refugees due to the economic and government crisis in Honduras \cite{HondCaravan}. These migrants are labeled as criminals and individuals who are likely to do harm to the native population \cite{criminalcricriminal}. However, it is far fetched to think that most of these migrants are criminals. Nevertheless, very few are considering the possible health risks that could arise due to these massive population movements.
Individuals may come with foreign pathogens that could cause health risks to the native population, as well as financial burden \cite{caravanPH}.

These recent events could lead to the introduction of infectious agents in a population, in turn causing major health concerns for local public health officials \cite{cdc2}. There are concerns over tuberculosis, foreign influenza strains, malaria, dengue, chikungunya, and Zika. The health of these migrant populations is crucial to the well being of the native population, since some of these pathogens could spread rapidly if the native population lacks immunity. For the very young and elderly, some of these diseases such as influenza and malaria could be fatal.

\subsection{United Sates immigration} \label{sec:datamex}
The United States has seen many changes on its immigrant population through the decades as illustrated in Figure \ref{fig:migusa}. The majority of individuals that attempt to enter the United States are looking for better job opportunities and a better way of life for their families, or trying to escape harsh political and economic conditions in their native countries. Moreover, many individuals that migrate to the United States do it legally or through the refugee program. However, recently the number of admissions has decreased \cite{mpi} (see Figure \ref{fig:refugee}).

\begin{figure}[!ht]
\centering
\subfloat[Change in the U.S. immigrant population by decade, 1860-2010.\label{fig:migusa}]{\includegraphics[width=0.49\textwidth]{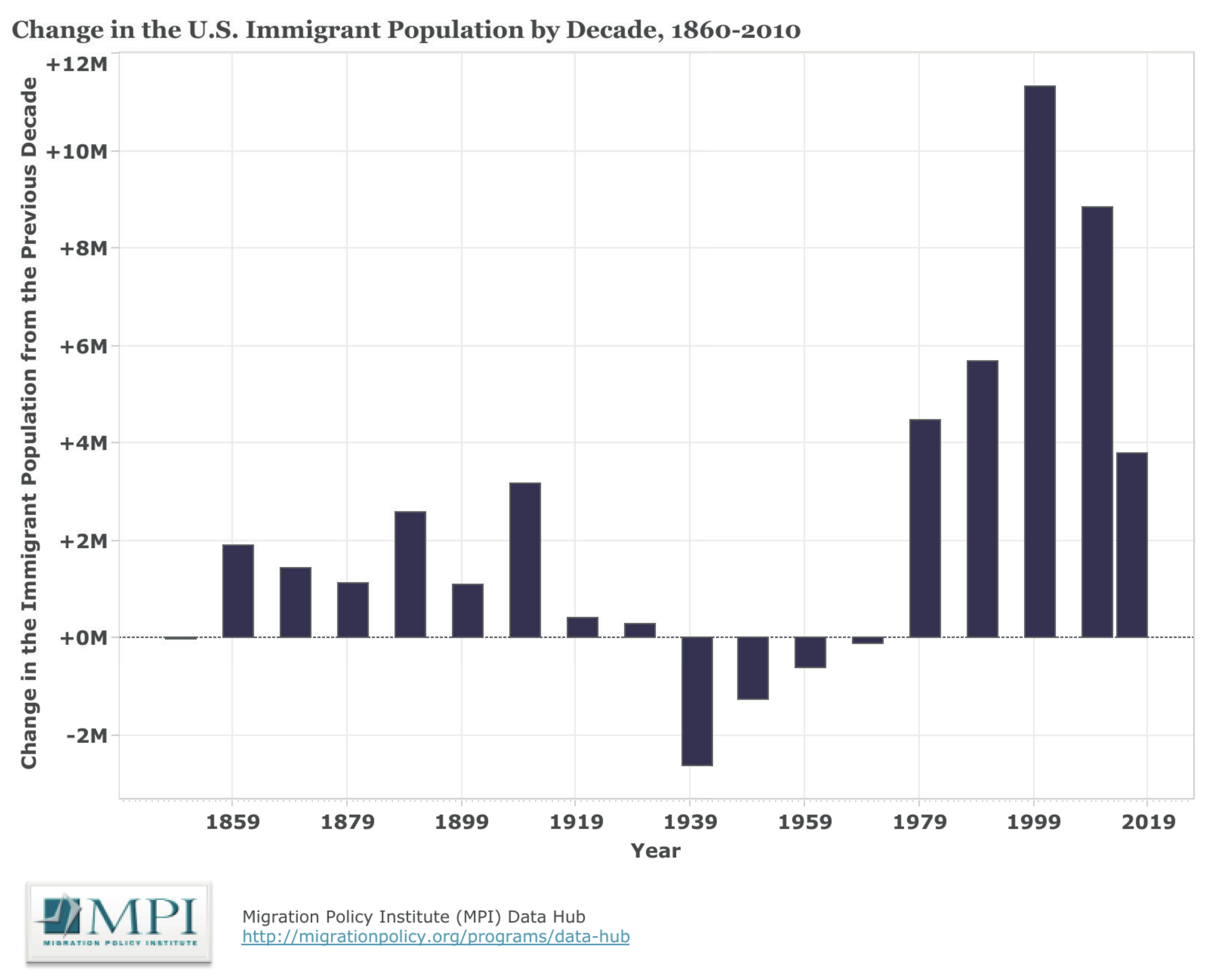}}\hfill
\subfloat[United States refugee admissions.\label{fig:refugee}]{\includegraphics[width=0.49\textwidth]{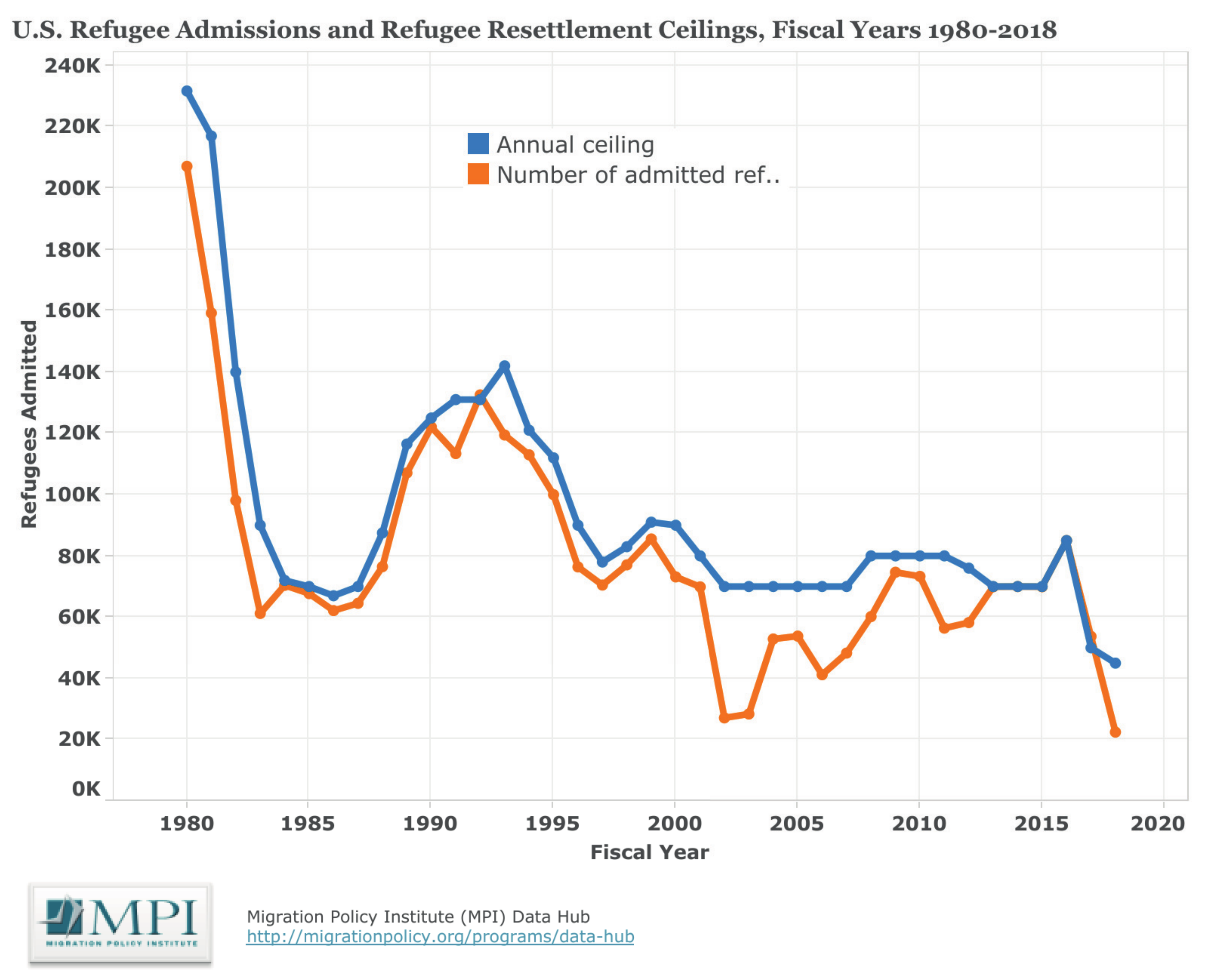}}\hfill
\subfloat[Mexican immigrant population from 1960-2016.\label{fig:mexusa}]{\includegraphics[width=0.49\textwidth]{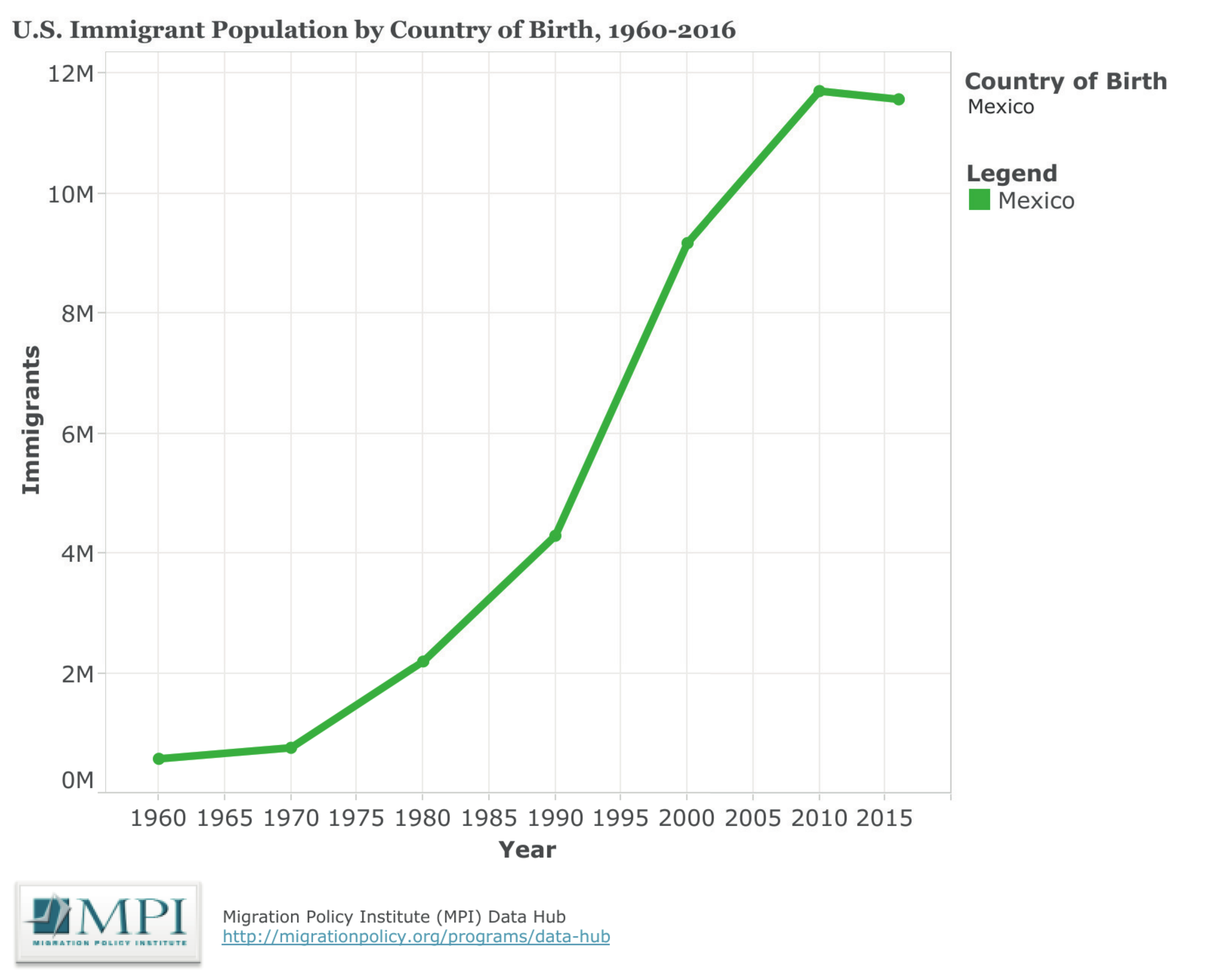}}\hfill
\subfloat[El Salvador, Honduras, Nicaragua and Venezuela immigrant population from 1960-2016.\label{fig:mig2usa}]{\includegraphics[width=0.49\textwidth]{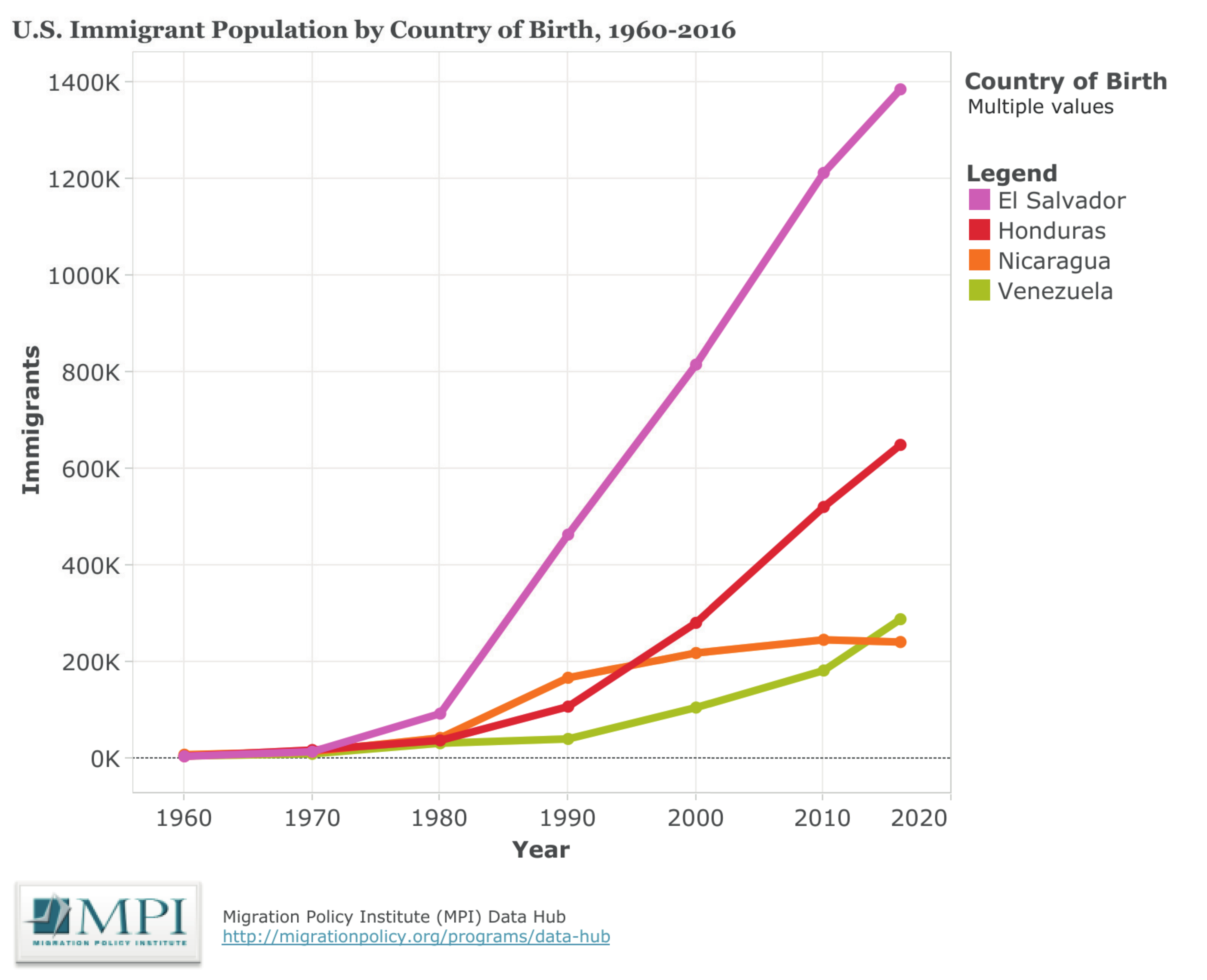}}
\caption{United States immigration and refugee data \cite{mpi}. \label{fig:usa}}
\end{figure} 

In Mexico, there is a long history of immigrants entering the United States both legally and illegally; see Figure \ref{fig:mexusa}. It is estimated that more than 6 million Mexican illegal immigrants currently live in the United States \cite{mpi}. Historically, most immigrants from Latin America come from Mexico. However, the landscape is changing as more immigrants from troubled countries are attempting to enter the United States; see Figure \ref{fig:mig2usa}.

In the last decades, immigrants have fled troubled countries looking for opportunities \cite{HondCaravan,Nica,Vene,Vene2,prcrisis}. Many come alone in order to establish themselves look for work and send money back to their families. Eventually, they try to bring their families or go back to their country, although this is likely a minority. As it pertains to the \lq\lq typical\rq\rq\ immigration patterns, immigrants have shown to positively contribute to society \cite{hirschman2013}. However, these large influx of individuals via {\it short-term immigration} can be disruptive if a new or absent infectious disease is introduced into a mostly susceptible population.

\section{A model with short-term immigration} \label{sec:model}
We explore an age-dependent nonlinear partial differential equation SIR model with {\it short-term immigration} effects, where $S(t,a)$, $I(t,a)$ and $R(t,a)$ represent susceptible, infected and recovered individuals at time $t$ and age $a$, respectively. Susceptible individuals can become infected by having contact with an infectious individual at rate $\beta(a) B(t,a)$, where $B(t,a)$ is the probability that a person contacted by a susceptible of age $a$ is infectious and $\beta(a)$ is the transmission rate at age $a$; they can also exit the system at rate $\mu(a)$. Infected individuals can recover at rate $\gamma(a)$ or exit the system at rate $\mu(a)$.

{\it Short-term immigration} is modeled via $\nu(t,a)$, the population of immigrants of age $a$ entering the new population per unit time ($t$). We assume that a fixed percentage of the immigrants are susceptible, infected and recovered, and define $\rho_S \nu(t,a)$, $\rho_I \nu(t,a)$, $\rho_R \nu(t,a)$ the proportions of susceptible, infected and recovered immigrants, respectively; see Figure \ref{fig:model}. We assume that  
$\rho_S, \rho_I, \rho_R\geq 0$, $\rho_S+\rho_I+\rho_R=1$, and that $\nu(t,a)$ is non-zero only at some finite time interval $t\in [0,T]$, since we are interested in a {\it short-term immigration}.
\begin{figure}[!t]
	\centering
	\includegraphics[width=0.55\textwidth]{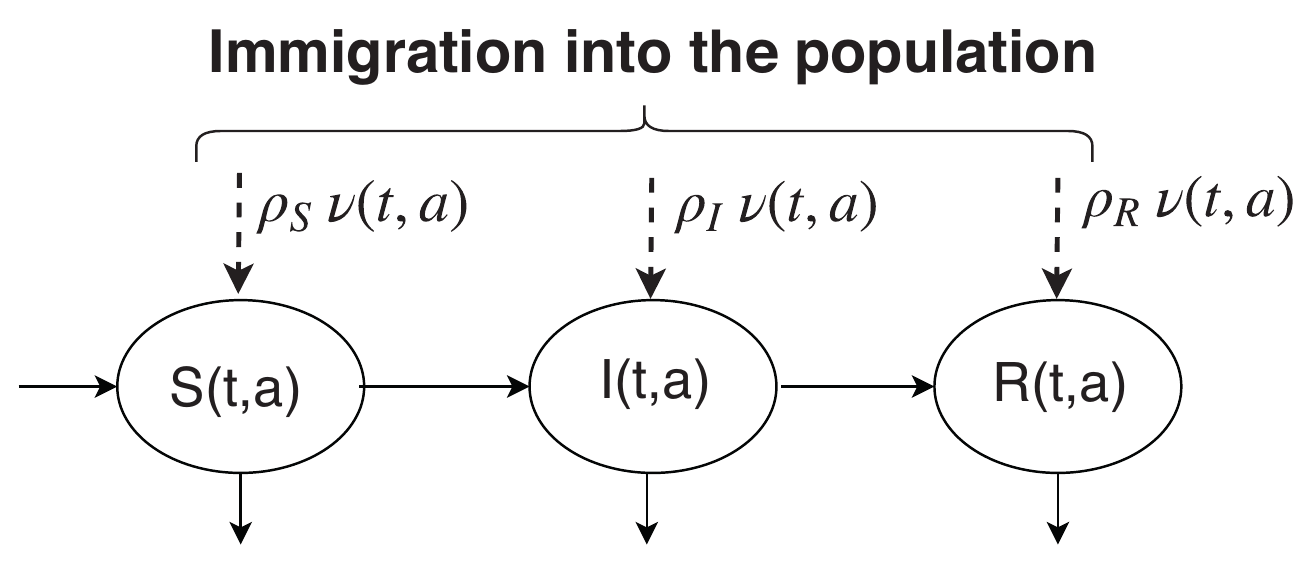}
      \caption{Model diagram highlighting immigration. \label{fig:model}}
\end{figure}
The age-dependent 
contact structure is modeled via $p(t,a,a')$, the mixing density, which gives the proportion of contact that individuals of age $a$ have with individuals of age $a'$ given that they had contact with somebody at time $t$. We will restrict ourselves to the case of proportional mixing; this is, $$p(t,a,a') \equiv p(t,a) =  \frac{c(a)n(t,a)}{\int_0^\infty c(a) n(t,a)\ da}, $$ with $c(a)$ the age-specific per-capita contact/activity rate; see, e.g., \cite{shim2006}. We then define the force of infection $$B(t) = \int_0^\infty \frac{I(t,a)}{n(t,a)} p(t,a)da,$$ where $n(t,a):= S(t,a)+I(t,a)+R(t,a)$ is the total population.

The model we just described is given by the system of partial differential equations
\begin{align}\label{PDE}
\left( \frac{\partial}{\partial t} + \frac{\partial}{\partial a} \right) S(t,a) &= \rho_S\ \nu(t,a) - \beta(a) S(t,a) B(t) -\mu(a)S(t,a),\\
\left( \frac{\partial}{\partial t} + \frac{\partial}{\partial a} \right) I(t,a) &= \rho_I\ \nu(t,a) + \beta(a) S(t,a) B(t) -(\mu(a) + \gamma(a)) I(t,a),\nonumber\\
\left( \frac{\partial}{\partial t} + \frac{\partial}{\partial a} \right) R(t,a) &=  \rho_R\ \nu(t,a) + \gamma(a) I(t,a) - \mu(a) R(t,a),\nonumber
\end{align} 
for $t>0$, $a>0$, and boundary conditions for $t=0$ and $a=0$ given by 
\begin{align} \label{pde_ic}
S(t,0)&=\int_0^\infty f(a) n(t,a)\ da , & I(t,0)&=0, & R(t,0)&=0,\nonumber\\
S(0,a)&=S_0(a), & I(0,a)&=I_0(a), & R(0,a)&=R_0(a),
\end{align}
where $f(a)$ is the fertility function per person of age a, and $S_0(a), I_0(a), R_0(a)$ are given initial conditions. 

We remark that the total population $n(t,a)$ satisfies the initial value problem
\begin{align}\label{eq:nta}
\left( \frac{\partial}{\partial t} +\frac{\partial}{\partial a} \right) n(t,a) &= \nu(t,a) -\mu(a) n(t,a),\\
n(t,0) &= \int_0^\infty f(a) n(t,a)\ da =: G(t),\nonumber\\
n(0,a) &= S_0(a)+I_0(a)+R_0(a) =: n_0(a),\nonumber
\end{align}
with $t>0$ and $a>0$. This model for $n(t,a)$ is similar to the one considered in \cite{franceschetti2008}, for which permanent immigration $\nu = \nu(a)$ is considered, independent of $t$.


\section{Immigration-free and Infection-free non-uniform steady state distributions} \label{sec:freesteadyState}
In this section we explore the {\it immigration-free} and  {\it infection-free} steady state distribution, as its local and global stability, as well as the {\it basic reproductive number}, based on \cite{sanchez2018x}.

In \cite{sanchez2018x}, an age-structured model with nonlinear recidivism is studied. Even though \cite{sanchez2018x} does not include immigration, our model will have a similar steady-state distribution as theirs, since we are assuming that $\nu(t,a)$ is non-zero for $t\in [0,T]$ for relatively small $t$, so immigration is temporary (note also that in our case there is no subsequent infections for the recovered class).

We can then define the {\it basic reproductive number}, $\mathcal{R}_0$, by
\begin{equation*} 
\mathcal{R}_0 := \int_0^\infty \int_0^a  p_\infty (a) \beta(h) e^{- \int_h^a \gamma(k)dk}  dh da,
\end{equation*}
where $$p_\infty (a) =  \frac{c(a)n^*(a)}{\int_0^\infty c(a) n^*(a)\ da}, $$ and $n^*(a)$ is the steady-state population; see \cite[Section 2]{sanchez2018x}. We have the two following results:

\begin{theorem}
If $\mathcal{R}_0 < 1$, the immigration-free and infection-free non-uniform steady state distributions of System \eqref{PDE} is locally asymptotically stable and unstable if $\mathcal{R}_0>1$. \label{thm:global}
\end{theorem}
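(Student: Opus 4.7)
The natural plan is to linearize System \eqref{PDE} around the immigration-free, infection-free equilibrium $(n^*(a),0,0)$, and then derive a characteristic equation whose dominant root determines stability. Setting $\nu\equiv 0$, writing $S=n^*+s$, $I=i$, $R=r$, and keeping only first-order terms, the $i$-equation decouples from $s$ and $r$ to leading order because the nonlinear term $\beta(a)S(t,a)B(t)$ linearizes to $\beta(a)n^*(a)B(t)$ with
\[
B(t)=\int_0^\infty\frac{i(t,a)}{n^*(a)}\,p_\infty(a)\,da.
\]
Thus the linearized equation for $i$ is
\[
\bigl(\partial_t+\partial_a\bigr)i(t,a)=\beta(a)n^*(a)B(t)-(\mu(a)+\gamma(a))i(t,a),\qquad i(t,0)=0.
\]

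Next I would seek exponential solutions $i(t,a)=e^{\lambda t}\phi(a)$ so that $B(t)=e^{\lambda t}\hat B$, integrate the resulting first-order linear ODE for $\phi$ along characteristics using $\phi(0)=0$, and obtain
\[
\phi(a)=\hat B\int_0^a \beta(h)n^*(h)\,e^{-\int_h^{a}(\lambda+\mu(k)+\gamma(k))\,dk}\,dh.
\]
Substituting back into the definition of $\hat B$ and invoking $n^*(a)=n^*(0)\exp\bigl(-\int_0^{a}\mu(k)\,dk\bigr)$ (which holds because $\nu\equiv 0$ in the equilibrium problem for $n$) to cancel the mortality factor $e^{-\int_h^{a}\mu(k)\,dk}$ against $n^*(h)/n^*(a)$, the consistency condition $\hat B\neq 0$ produces the characteristic equation
\[
\mathcal{K}(\lambda):=\int_0^\infty\int_0^{a}p_\infty(a)\,\beta(h)\,e^{-\lambda(a-h)}\,e^{-\int_h^{a}\gamma(k)\,dk}\,dh\,da=1.
\]
By inspection $\mathcal{K}(0)=\mathcal{R}_0$, matching the definition given in the excerpt.

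The analytical heart of the argument is then the monotonicity and majorization of $\mathcal{K}$. For real $\lambda$ the integrand is strictly decreasing in $\lambda$, $\mathcal{K}(\lambda)\to 0$ as $\lambda\to+\infty$, and $\mathcal{K}(\lambda)\to +\infty$ as $\lambda\to-\infty$ (under mild integrability assumptions on $\beta,\gamma$), so there is a unique real root $\lambda^{*}$ of $\mathcal{K}(\lambda)=1$, with $\lambda^{*}<0\iff\mathcal{R}_0<1$ and $\lambda^{*}>0\iff\mathcal{R}_0>1$. For any complex root $\lambda=u+iv$ the triangle inequality yields $|\mathcal{K}(u+iv)|\le\mathcal{K}(u)$, so $1=|\mathcal{K}(\lambda)|\le\mathcal{K}(u)$ forces $u\le\lambda^{*}$; hence every root of the characteristic equation has real part bounded by $\lambda^{*}$. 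When $\mathcal{R}_0<1$ this places the spectrum strictly in the open left half-plane (the perturbations in $s$ and $r$ only add the ``trivial'' eigenvalues coming from the mortality operator, which are also negative), giving local asymptotic stability; when $\mathcal{R}_0>1$ the positive real root $\lambda^{*}$ itself is an unstable eigenvalue.

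The main obstacle I anticipate is not the formal derivation of $\mathcal{K}$ but justifying that the linearized stability dichotomy genuinely governs the nonlinear age-structured PDE: one must cast the linearization as the generator of a $C_0$-semigroup on an appropriate $L^1$ space, verify that spectrum-determined growth applies (e.g.\ via compactness of the relevant resolvent, as is standard for proportional-mixing SIR models), and check that the $S$ and $R$ perturbations — whose $s(t,0)$ involves the fertility functional $\int f n$ — do not introduce spurious non-negative eigenvalues beyond those captured by $\mathcal{K}$. Once these semigroup technicalities are in place, which I would handle by citing the framework of \cite{sanchez2018x}, the stability conclusion follows directly from the sign analysis of $\lambda^{*}$ above.
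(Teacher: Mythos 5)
Your linearization/characteristic-equation argument is correct and is the standard route for this statement: the cancellation $n^*(h)e^{-\int_h^a\mu(k)\,dk}=n^*(a)$ (valid since $n^*(a)=n^*(0)\pi(a)$ at the demographic steady state) yields exactly the paper's $\mathcal{K}(0)=\mathcal{R}_0$, and the majorization $|\mathcal{K}(u+iv)|\le\mathcal{K}(u)$ correctly disposes of the complex roots. The paper itself offers no argument at all --- its ``proof'' is a one-line citation of Theorem~3.1 of \cite{sanchez2018x}, whose content is precisely this computation (the present model being the special case with no reinfection of the recovered class and $\nu\equiv 0$ at equilibrium), so you have in effect reconstructed the cited proof, modulo the semigroup/spectral-growth technicalities that you correctly flag and defer to the same reference.
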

\begin{proof} 
See \cite[Theorem 3.1]{sanchez2018x}.
\end{proof}

We also have the following result about global stability:
\begin{theorem}\label{th:globalStab}
Assume that the basic reproductive number satisfies $\mathcal{R}_0<1$. Then, the solution of System \eqref{PDE} is globally asymptotically stable.
\end{theorem}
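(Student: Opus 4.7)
My plan is to exploit the \emph{short-term} hypothesis on immigration to reduce the global stability question, after a finite time $T$, to one about the purely autonomous system without immigration, and then combine a comparison argument for $I$ with the convergence of the total population $n(t,a)\to n^*(a)$.

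First, I would observe that because $\nu(t,a)$ vanishes for $t>T$, the PDE \eqref{PDE} reduces on $[T,\infty)$ to the standard age-structured SIR system whose trivial (infection-free) steady state is exactly the non-uniform distribution $(n^*(a),0,0)$. Taking the solution at time $T$ as initial data, the problem becomes: show that every nonnegative solution of the immigration-free system converges to $(n^*,0,0)$ when $\mathcal{R}_0<1$. For this, I would first handle the total population. Adding the three equations of \eqref{PDE} gives precisely \eqref{eq:nta} with $\nu\equiv 0$ on $[T,\infty)$, which is the linear McKendrick--von Foerster renewal problem. Under the standard demographic balance assumption that underlies the very definition of $n^*$, classical renewal theory (as used, e.g., in \cite{franceschetti2008}) yields $n(t,a)\to n^*(a)$ as $t\to\infty$, uniformly on compact age intervals.

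Next, I would control the infected class. Because $S(t,a)\le n(t,a)$ and the mixing is proportional, the force of infection satisfies
\begin{equation*}
B(t)=\int_0^\infty \frac{I(t,a)}{n(t,a)}\,p(t,a)\,da,
\end{equation*}
so $S(t,a)B(t)\le n(t,a)B(t)$ and the $I$-equation is dominated, along characteristics, by the linear age-structured equation
\begin{equation*}
\left(\tfrac{\partial}{\partial t}+\tfrac{\partial}{\partial a}\right)\bar I(t,a)=\beta(a)\,n(t,a)\,\bar B(t)-(\mu(a)+\gamma(a))\bar I(t,a),\qquad \bar I(t,0)=0,
\end{equation*}
with $\bar I(T,a)=I(T,a)$. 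Integrating along characteristics and using $p_\infty$ as the limiting mixing kernel, the resulting renewal-type inequality for $B(t)$ has kernel whose spectral radius is exactly $\mathcal{R}_0$. Since $n(t,\cdot)\to n^*(\cdot)$, for any $\varepsilon>0$ we can guarantee that for $t$ sufficiently large the relevant operator has spectral radius at most $\mathcal{R}_0+\varepsilon<1$, which by the usual Paley--Wiener/renewal argument forces $\bar I(t,\cdot)\to 0$ (exponentially, in $L^1$). By comparison $I(t,\cdot)\to 0$ as well.

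Finally, with $I(t,\cdot)\to 0$, the $R$-equation becomes asymptotically the pure mortality transport $(\partial_t+\partial_a)R=-\mu(a)R$ with boundary condition $R(t,0)=0$, so $R(t,\cdot)\to 0$; and then $S(t,\cdot)=n(t,\cdot)-I(t,\cdot)-R(t,\cdot)\to n^*(\cdot)$, establishing global asymptotic stability. The main obstacle I anticipate is rigorously upgrading the comparison inequality for the force of infection to the exponential decay of $\|I(t,\cdot)\|_{L^1}$: one needs a uniform bound on $n(t,a)$ during the transient $t\in[0,T]$ (to get started from bounded initial data) and a careful passage to the limit inside the renewal kernel so that the perturbation caused by $n(t,\cdot)\ne n^*(\cdot)$ does not push the effective reproduction number back above one. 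Together with local asymptotic stability from Theorem \ref{thm:global}, this comparison then closes the argument.
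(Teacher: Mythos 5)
Your argument is a sound outline, but it is genuinely different in character from what the paper does: the paper's entire proof of Theorem~\ref{th:globalStab} is a one-line deferral to \cite[Theorem 3.2]{sanchez2018x}, ``for the particular case when there are no subsequent infections,'' with no reduction or estimate carried out in the text. What you supply is essentially the missing content of that citation: the observation that $\nu\equiv 0$ for $t>T$ turns \eqref{PDE} into the autonomous immigration-free system with data at $t=T$, the convergence $n(t,\cdot)\to n^*$ from renewal theory applied to \eqref{eq:nta}, the comparison $\beta S B\le \beta n B$ leading to a renewal inequality for $B(t)$ whose kernel has spectral radius $\mathcal{R}_0$, and the Paley--Wiener step forcing $I\to 0$, after which $R\to 0$ and $S\to n^*$ follow. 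This is the standard route and almost certainly mirrors the cited proof; its value here is that it makes explicit the two hypotheses the paper leaves tacit: (i) the demographic normalization $\int_0^\infty f(a)\pi(a)\,da=1$ needed for $n(t,\cdot)\to n^*$ rather than exponential growth or decay, and (ii) the fact that the \emph{short-term} assumption on $\nu$ is what licenses reusing an immigration-free stability theorem at all. One detail to watch when you write the renewal inequality: since $B(t)$ is built from $I/n$ with proportional mixing, the survival factor $\pi$ cancels between numerator and denominator, which is why the kernel in $\mathcal{R}_0$ carries only $e^{-\int_h^a\gamma(k)\,dk}$ and not the mortality $\mu$; your dominating linear equation for $\bar I$ should be normalized accordingly before identifying its spectral radius with $\mathcal{R}_0$. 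The perturbation argument you flag (keeping the effective kernel below $1$ while $n(t,\cdot)\ne n^*$) is the real technical work, but it is standard and you have correctly identified it as the crux.
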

\begin{proof}
The result follows from \cite[Theorem 3.2]{sanchez2018x}, for the particular case when there are no subsequent infections.
\end{proof}

\section{Numerical experiments} \label{sec:exp}
In this section, we first describe the numerical implementation for solving System \eqref{PDE}. We then implement distinct numerical scenarios based on recent events related to large immigrant groups \cite{HondCaravan,Nica,Vene,Vene2,prcrisis}. All parameters are in units of $year^{-1}$. Particularly, we choose $\nu(t,a)$ based on the number of Mexican immigrants that entered the United States in 2016, and $n_0(a)$, $f(a)$, $\mu(a)$ were chosen according to US demographic data taken from \cite{mpi}. 

\subsection{Numerical implementation} \label{sec:numer_imp}
There is numerical evidence that the numerical scheme is sensitive to the computation of $B(t)$, for which we need to evaluate $n(t,a)$ at each time $t$. By using the method of characteristics, from system \eqref{eq:nta} we obtain
\begin{equation} \label{eq:nta_sol}
    n(t,a) = \left\lbrace
    \begin{array}{rl}
        \pi(a) \left( \dfrac{n_0(a-t)}{\pi(a-t)}+ \displaystyle\int_{a-t}^a \dfrac{\nu(\tau+t-a,\tau)}{\pi(\tau)}\ d\tau \right)& {\rm if }\ t\le a,  \\
        \pi(a)\left( G(t-a)+ \displaystyle\int_0^a \dfrac{\nu(\tau+t-a,\tau)}{\pi(\tau)}\ d\tau \right) & {\rm if }\ t>a,
    \end{array}
    \right.
\end{equation}
where $$\pi(a) := e^{-\int_0^a \mu(h)\ dh}$$ is the proportion of individuals that survive at age $a$.
We note that the model requires the compatibility condition $$n_0(0) = \int_0^\infty f(a) n_0(a)\ da$$ in order to obtain a continuous solution $n(t,a)$.
Since $G(t) = \int_0^\infty f(a) n(t,a)\ da $, we can substitute $n(t,a)$ as given in \eqref{eq:nta_sol} to obtain the Volterra integral equation
\begin{equation*}
G(t) = \int_0^t G(t-a) K(a)\ da + H(t),
\end{equation*}
with kernel $K(a) = f(a)\pi(a)$, and non-homogeneous term 
\begin{align*}
H(t) =&\displaystyle \int_0^t \int_0^a f(a) \dfrac{\pi(a)}{\pi(\tau)}\nu(\tau+t-a,\tau) d\tau da\  \\
&+\displaystyle \int_t^\infty \int_{a-t}^a f(a) \dfrac{\pi(a)}{\pi(\tau)}\nu(\tau+t-a,\tau) d\tau da  \\ &+\displaystyle \int_t^\infty f(a) \dfrac{\pi(a)}{\pi(a-t)}n_0(a-t) da.
\end{align*}

Nevertheless, for {\it short-term immigration}, we are interested in the behavior of the model for small values of $t$. In that sense, we have that $$G(t) = H(t),$$ since the kernel $K(t,a)$ vanishes for small values of $t$ due to the fertility function $f$. We can then compute accurately $G(t)$, which allows us to obtain $n(t,a)$ from \eqref{eq:nta_sol}.


We then discretize \eqref{PDE} with a first-order upwind finite difference scheme and approximate its solution on the rectangle $$\lbrace (t,a)\in [0,T]\times [0,A]\rbrace,$$ where $A$ represents the maximum human lifespan. We first construct a uniform grid with equidistant points. Consider a partition $\lbrace t_j \rbrace_{j=0}^{N_T}$ of the interval $[0,T]$ with $N_T$ equidistant points, and a partition $\lbrace a_k \rbrace_{k=0}^{N_A}$ of the interval $[0,A]$ with $N_A$ equidistant points. Thus, the nodes $(t_j,a_k)$ on the rectangular mesh are given by 
$$(t_j,a_k) = \left( j\Delta t, k\Delta a\right), \quad j\in\lbrace 0,1,\ldots,N_T\rbrace,\quad  k\in\lbrace 0,1,\ldots,N_A\rbrace,$$
where 
$$\Delta t := \frac{T}{N_T-1},\quad \Delta a := \frac{A}{N_A-1}$$ are the corresponding step sizes. We recall that it is necessary that $\Delta t <\Delta a$ in order to satisfy the CFL and stability conditions of the scheme.

For any function $X$ and a grid point $(t_j,a_k)$, we denote the approximation of $X(t_j,a_k)$ by $X_k^j$. 
Since
\begin{align*}
\left( \frac{\partial}{\partial t} + \frac{\partial}{\partial a} \right) X(t_j,a_k) =&\  \frac{X(t_j+\Delta t, a_k)-X(t_j,a_k)}{\Delta t} + \mathcal{O}(\Delta t) \\
& \quad +  \frac{X(t_j,a_k)- X(t_j,a_{k}-\Delta a)}{\Delta a} + \mathcal{O}(\Delta a),
\end{align*}
we approximate the derivatives by
\begin{equation*}
\left( \frac{\partial}{\partial t} + \frac{\partial}{\partial a} \right) X(t_j,a_k) \approx  \frac{X^{j+1}_{k} - X^{j}_{k}}{\Delta t} + \frac{X^j_{k} - X^{j}_{k-1}}{\Delta a}.
\end{equation*}
Similarly, define $\nu_k^j = \nu(t_j,a_k)$, $\beta_k:=\beta(a_k)$, $\mu_k:=\mu(a_k)$, $\gamma_k:=\gamma(a_k)$ and $B^j:=B(t_j)$. Then, by evaluating at all the grid points, the discretization for system \eqref{PDE} is given by the explicit system
\begin{align} \label{PDE:discrete}
\frac{S^{j+1}_{k} - S^{j}_k}{\Delta t} + \frac{S^j_k - S^{j}_{k-1}}{\Delta a} &= \rho_S \nu_k^j - \beta_k S^j_k B^j - \mu_k S^j_k,\nonumber\\
\frac{I^{j+1}_k - I^{j}_k}{\Delta t} + \frac{I^j_k - I^{j}_{k-1}}{\Delta a} &= \rho_I \nu_k^j + \beta_k S^j_k B^j - (\mu_k+\gamma_k) I^j_k, \nonumber \\
\frac{R^{j+1}_k - R^{j}_k}{\Delta t} + \frac{R^j_k - R^{j}_{k-1}}{\Delta a} &= \rho_R \nu_k^j + \gamma_k I^j_k- \mu_k R_k^j,
\end{align}
for $1\leq k\leq N_A$ and $0\leq j\leq N_T-1$. We recall that the initial conditions \eqref{pde_ic} provide the values $S_0^j$, $S_k^0$, $I_0^j$, $I_k^0$, $R_0^j$, and $R_k^0$. The integral $$B^j = B(t_j) = \int_0^\infty \dfrac{I(t_j,a)}{n(t_j,a)} p(t_j,a)\ da = \dfrac{\int_0^\infty c(a) I(t_j,a)\ da}{\int_0^\infty c(a) n(t_j,a)\ da}  $$ is approximated via MATLAB's command \texttt{integral}. We remark that for a fixed time $t_j$, $B^j$ depends only on values of $I$ at the same time $t_j$. Finally, solving for $S^{j+1}_k$, $I^{j+1}_k$ and $R^{j+1}_k$ from \eqref{PDE:discrete}, we obtain that
\begin{subequations}
\begin{align*}
\begin{split} 
S^{j+1}_k  &= S^{j}_k + \Delta t \left( \rho_S \nu_k^j - \beta_k S^j_k B^j - \mu_k S^j_k  - \frac{S^j_k - S^{j}_{k-1}}{\Delta a}\right),
\end{split}\\
\begin{split}
I^{j+1}_k &= I^{j}_k +  \Delta t\left( \rho_I \nu_k^j + \beta_k S^j_k B^j - (\mu_k+\gamma_k) I^j_k  -  \frac{I^j_k - I^{j}_{k-1}}{\Delta a} \right),
\end{split}\\
\begin{split} 
R^{j+1}_k &= R^{j}_k+ \Delta t\left( \rho_R \nu_k^j + \gamma_k I^j_k- \mu_k R_k^j -  \frac{R^j_k - R^{j}_{k-1}}{\Delta a} \right),
\end{split}
\end{align*}
\end{subequations}
for $1\leq k\leq N_A$ and $0\leq j\leq N_T-1$. Therefore, starting with the initial conditions $S_0^j$, $S_k^0$, $I_0^j$, $I_k^0$, $R_0^j$, $R_k^0$, we can compute, in successive time steps, the values of the unknowns on the grid points.

\subsection{Numerical simulations} \label{ex:var_coef}
We discuss different scenarios in order to study the effect of {\it short-term immigration} in a population. The mortality rate $\mu(a)$, the initial population $n_0(a)$ and the fertility rate are based on data for the US from \cite{mpi}. The proportion of immigrants $\nu(t,a)$ is assumed to be constant as a function of $t$ in a two-year period $0\leq t\leq 2$, based on the age distribution of immigrants from Mexico into the US on 2015 \cite{mpi}. These functions are shown in Figure \ref{fig:param_a}.

\begin{figure}[H]
\centering
\subfloat[Natural mortality rate, $\mu(a)$.\label{fig:mu}]{\includegraphics[width=0.4\textwidth]{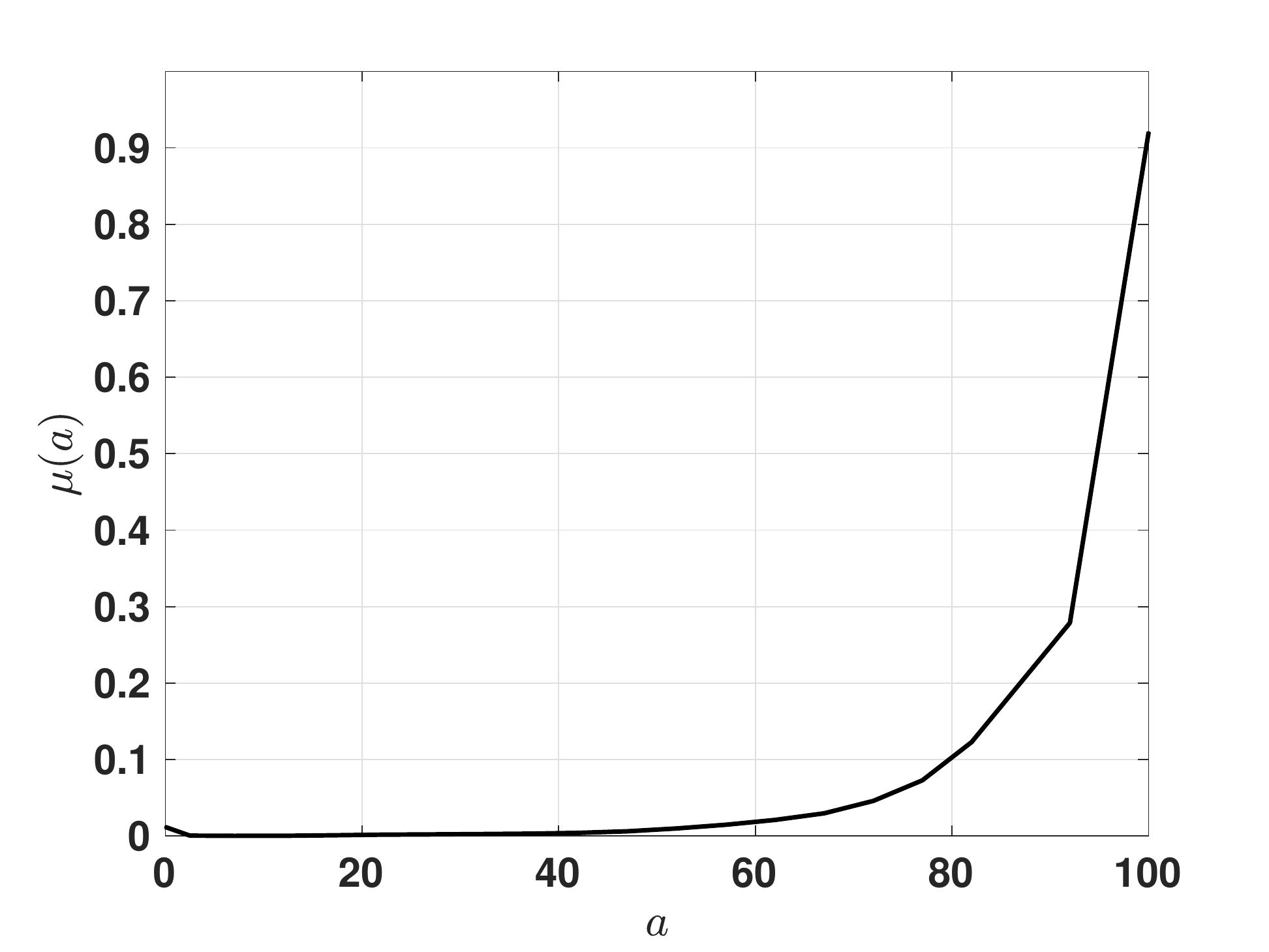}}\hfill
\subfloat[Fertility function, $f(a)$.\label{fig:fert}]{\includegraphics[width=0.4\textwidth]{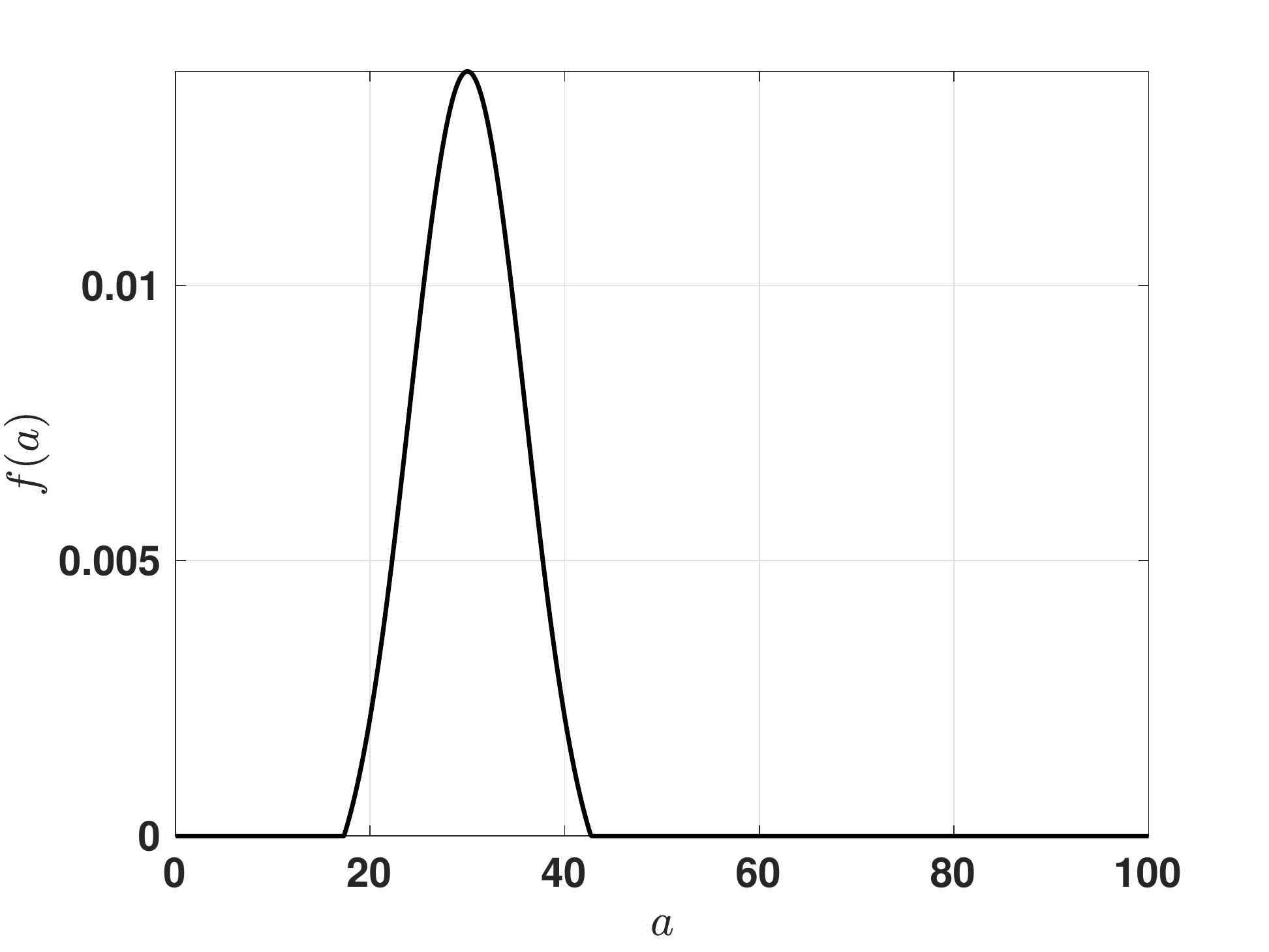}}\hfill
\subfloat[Initial population, $n_0(a)$.\label{fig:n0}]{\includegraphics[width=0.4\textwidth]{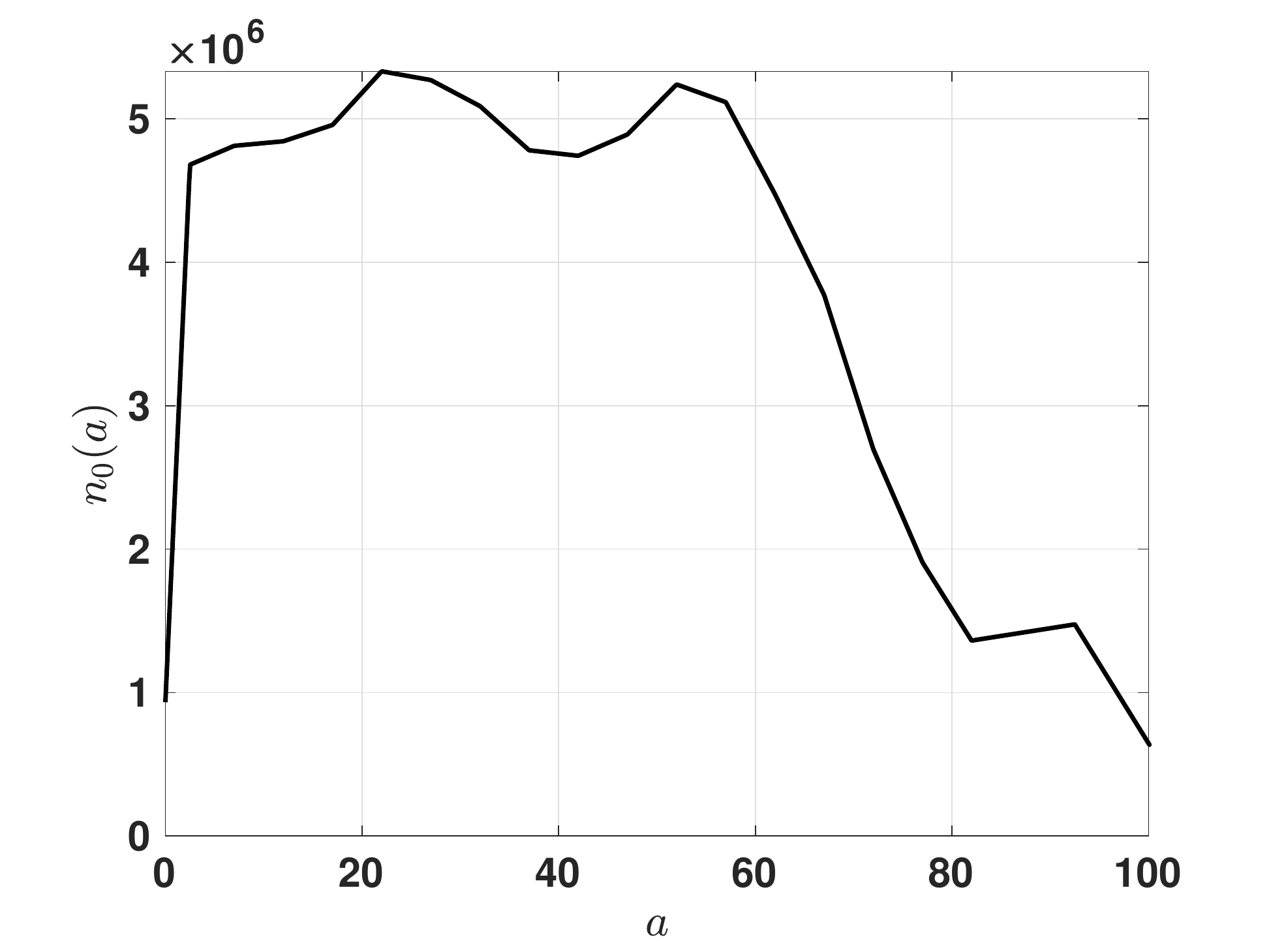}}\hfill
\subfloat[Immigrant population of age $a$ entering the new population, $\nu(t,a)$.\label{fig:nu}]{\includegraphics[width=0.4\textwidth]{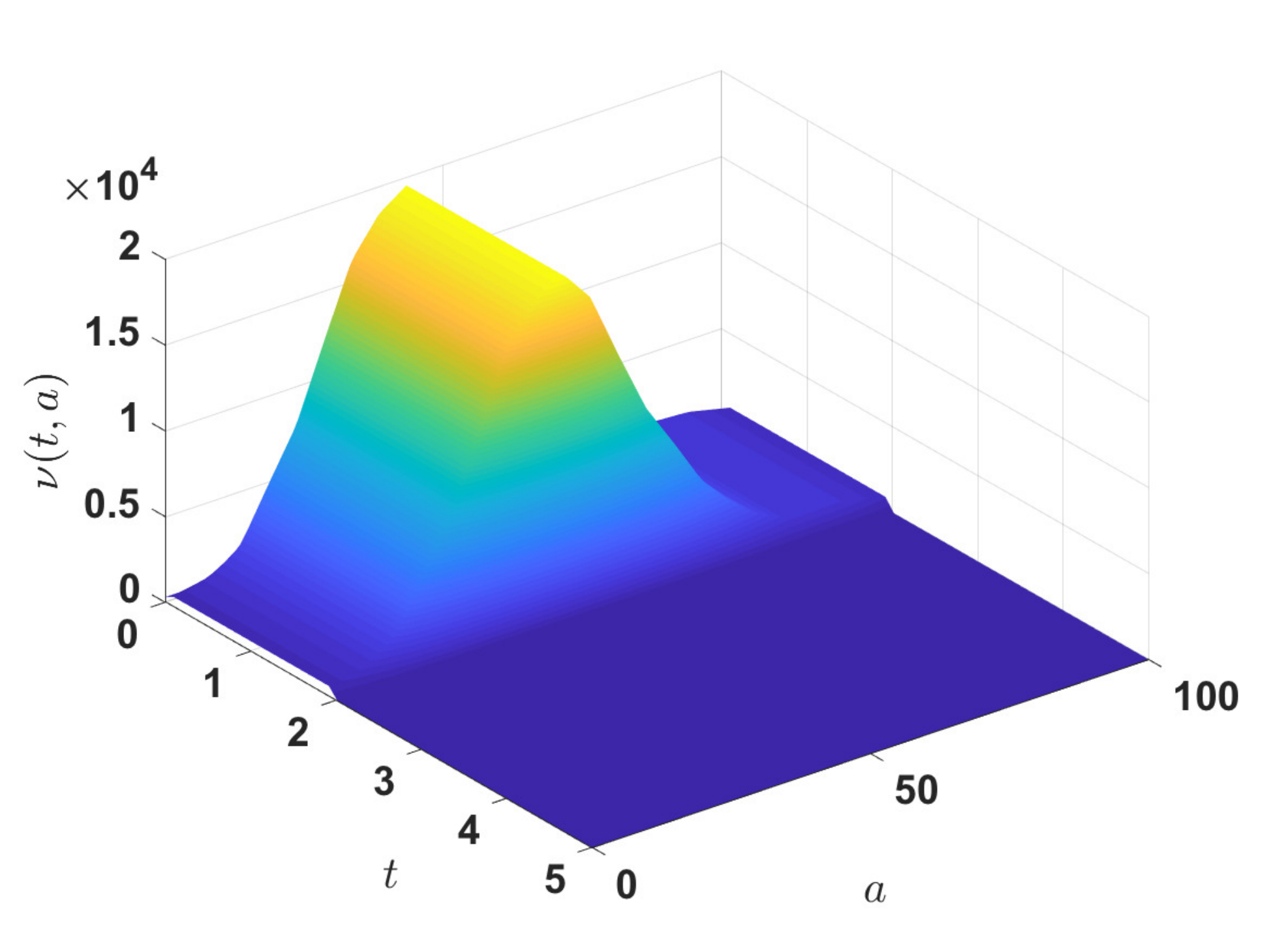}}
\caption{Distributions used on the numerical experiments based on \cite{mpi}.\label{fig:param_a}}
\end{figure}

\begin{figure}[H]
\centering
\subfloat[$\gamma(a)$.]{\includegraphics[width=0.4\textwidth]{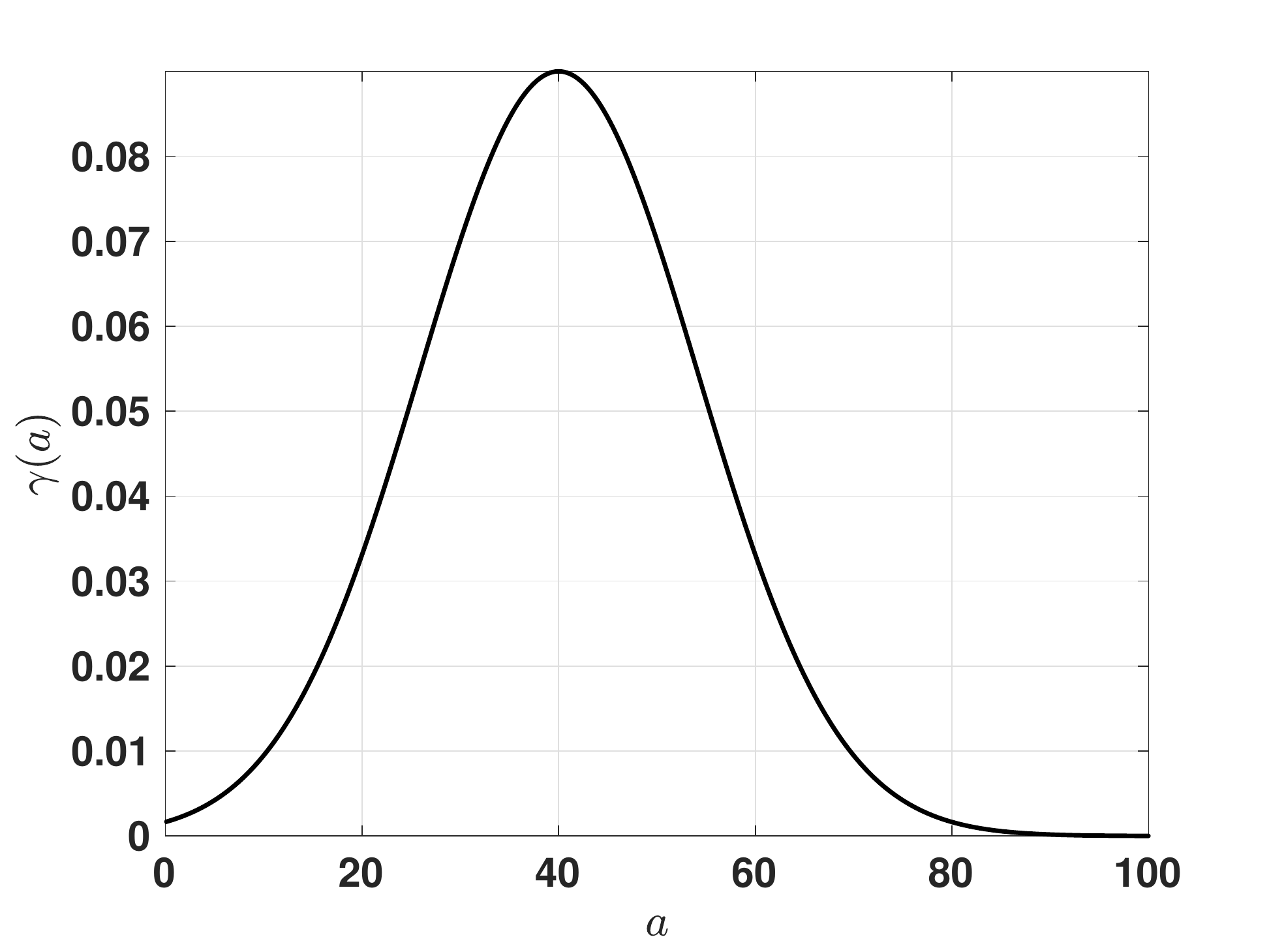}}\hfill
\subfloat[$c(a)$.]{\includegraphics[width=0.4\textwidth]{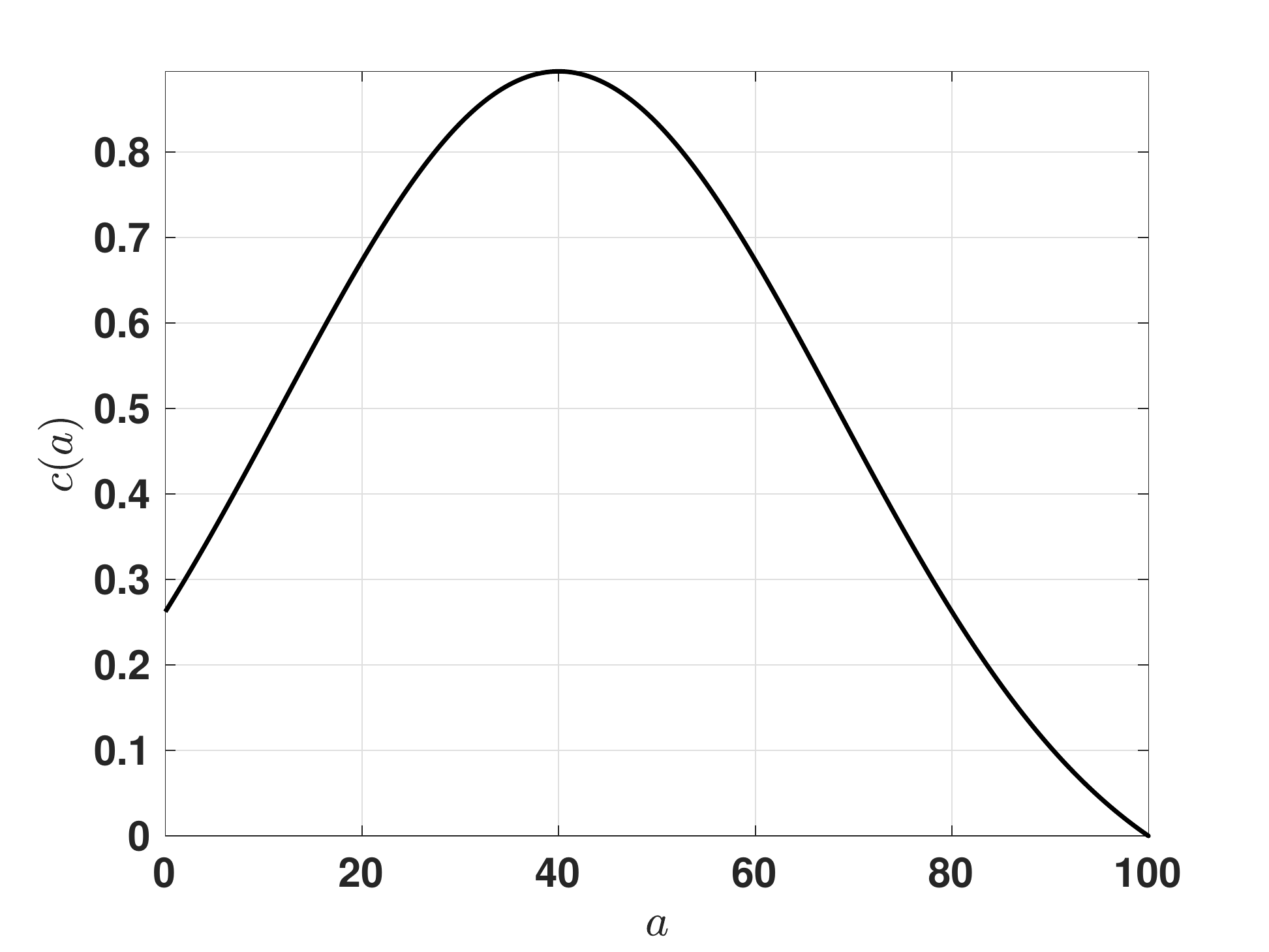}}\hfill
\caption{Recovery rate (left) and contact rate (right) as functions of age, modeled via a normal distribution.\label{fig:ex1_gamma_c}}
\end{figure}

\begin{example} \label{ex1} 
{\rm
Consider a susceptible population ($I_0(a) \equiv 0 \equiv R_0(a)$) that receives some infected immigrants. In this experiment, we use the parameters $c(a)$, $\gamma(a)$ as shown in Figure \ref{fig:ex1_gamma_c}, $\rho_R=0$ and vary $\rho_I\in [0,1]$, $\rho_S = 1-\rho_I$. 
\begin{figure}[!t]
\centering
\subfloat[$\beta_1(a)$.\label{fig:beta1}]{\includegraphics[width=0.45\textwidth]{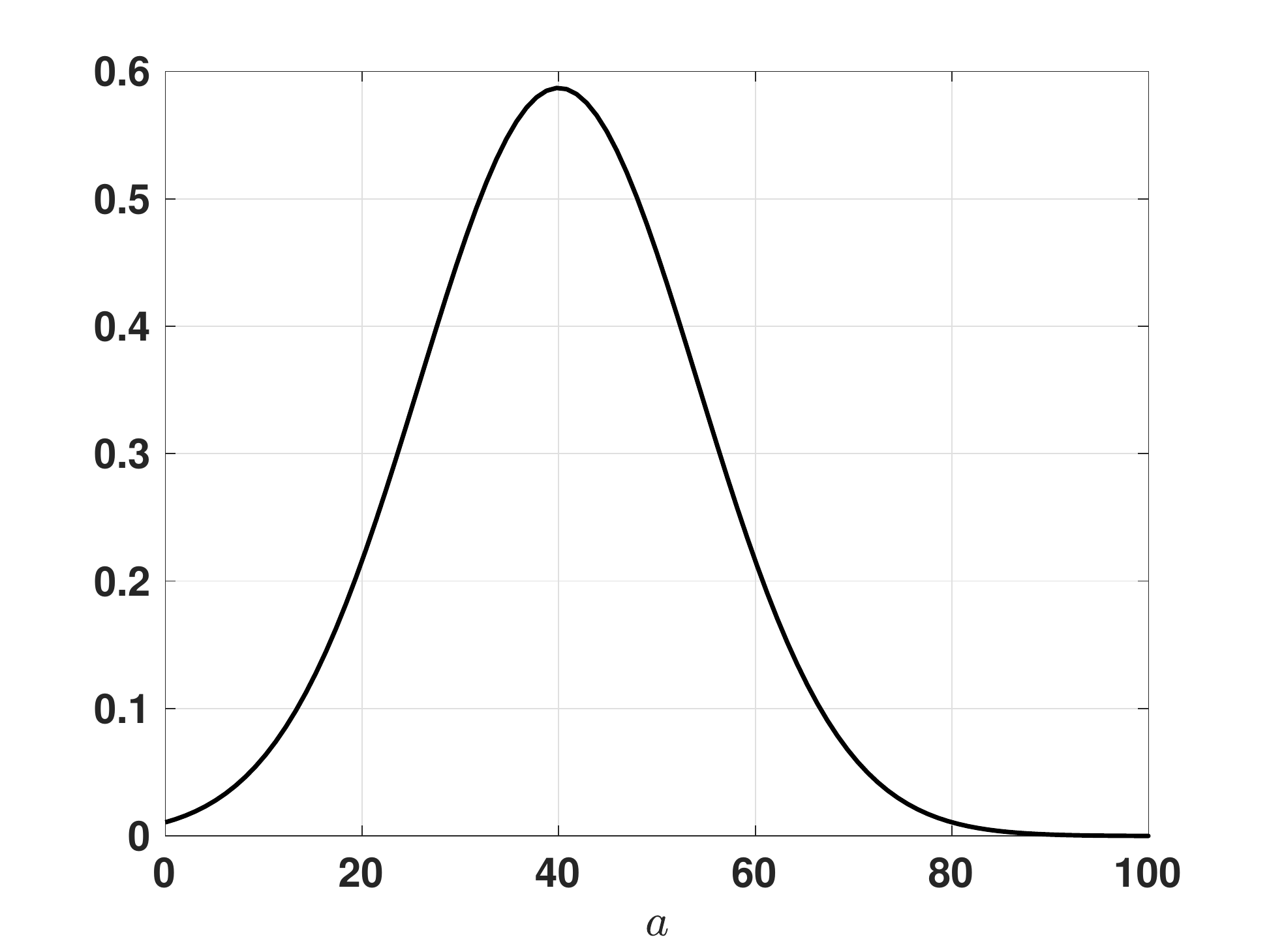}}\hfill
\subfloat[${I(10,a)}/{n(10,a)}$ for $\beta_1(a)$, $\rho_R=0$.\label{fig:ex1b1}]{\includegraphics[width=0.45\textwidth]{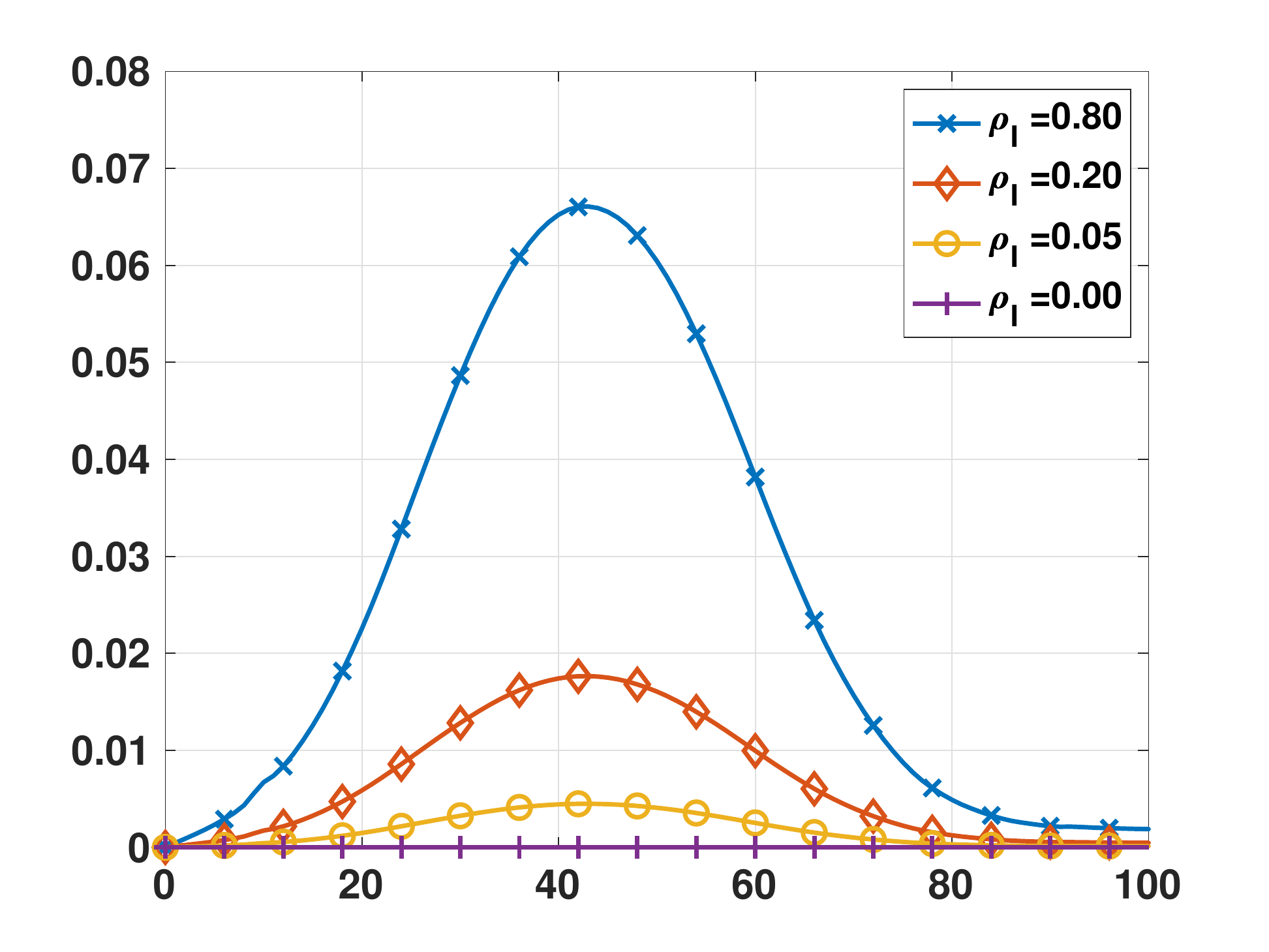}}\hfill
\subfloat[$\beta_2(a)$.\label{fig:beta2}]{\includegraphics[width=0.45\textwidth]{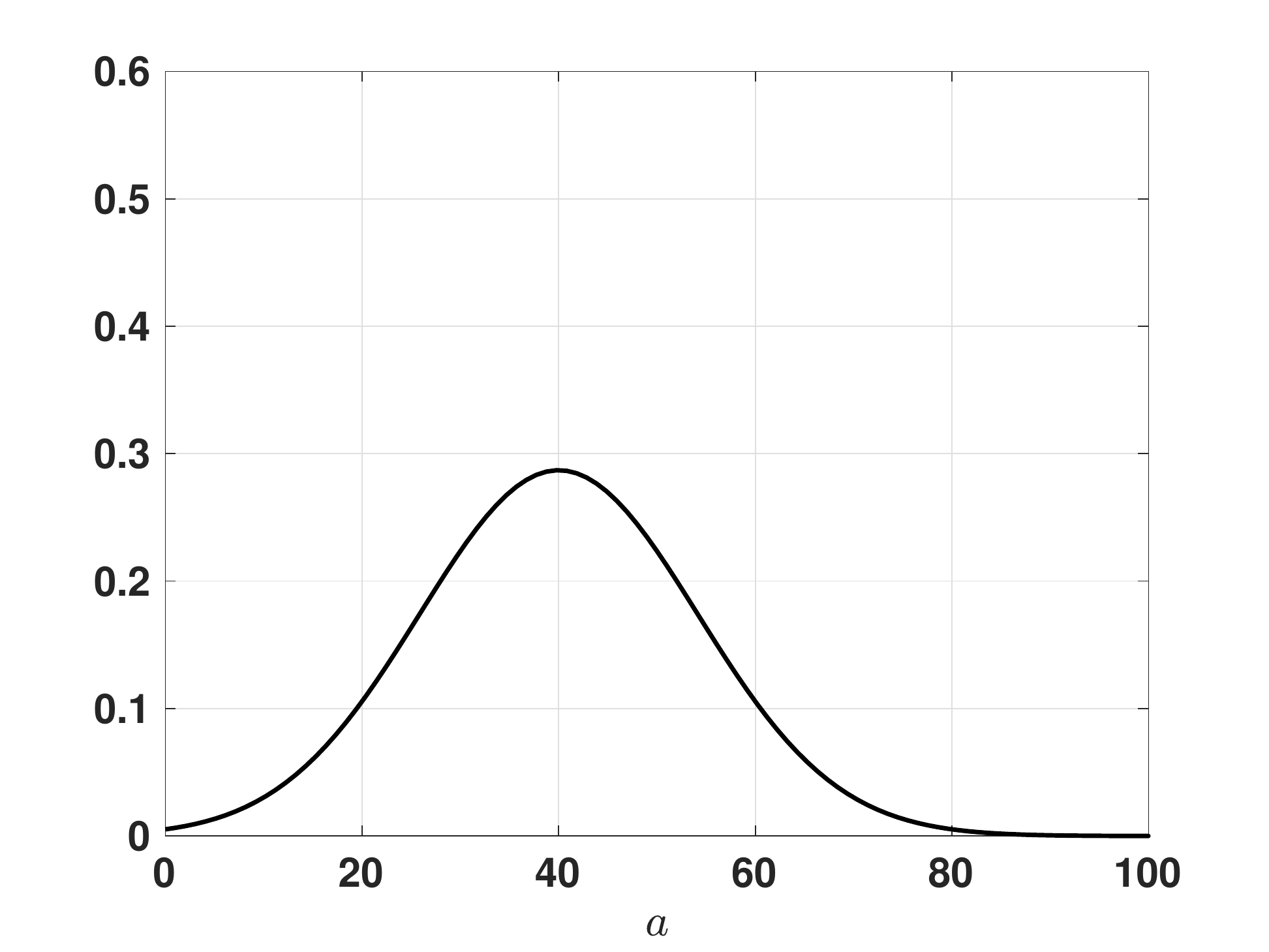}}\hfill
\subfloat[${I(10,a)}/{n(10,a)}$ for $\beta_2(a)$, $\rho_R=0$. \label{fig:ex1b2}]{\includegraphics[width=0.45\textwidth]{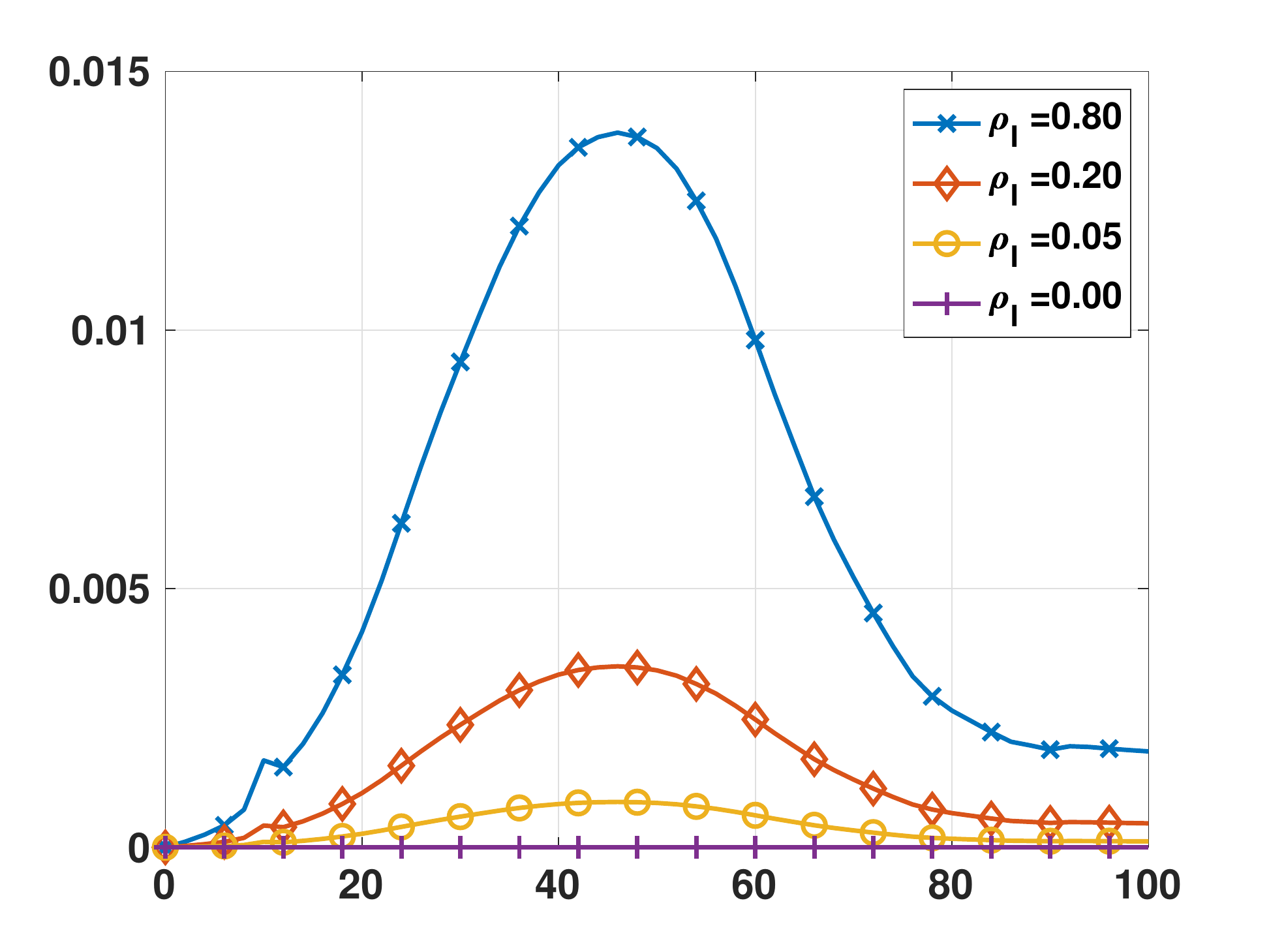}}\hfill
\subfloat[$\beta_3(a)$.\label{fig:beta3}]{\includegraphics[width=0.45\textwidth]{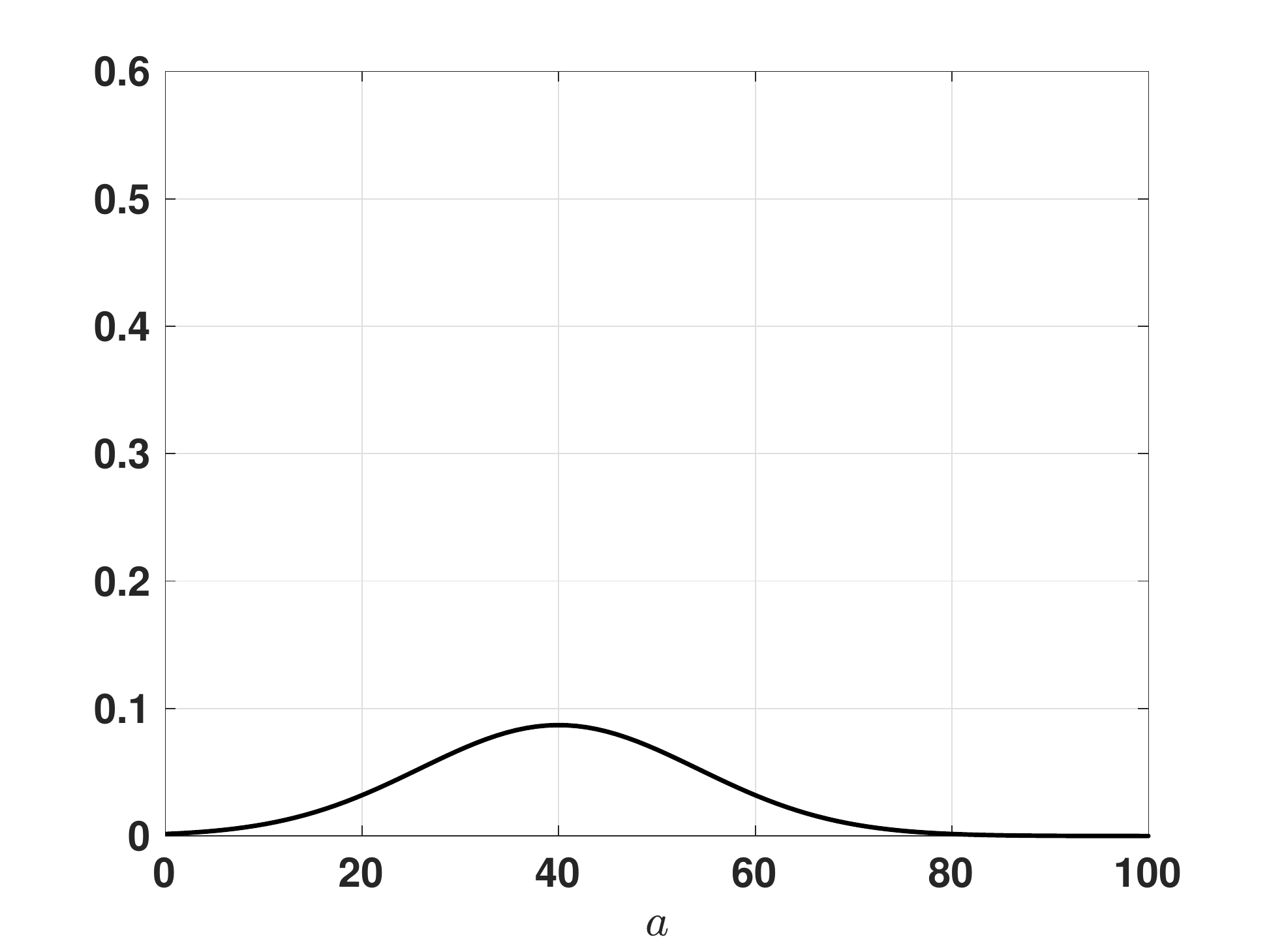}}\hfill
\subfloat[${I(10,a)}/{n(10,a)}$ for $\beta_3(a)$, $\rho_R=0$. \label{fig:ex1b3}]{\includegraphics[width=0.45\textwidth]{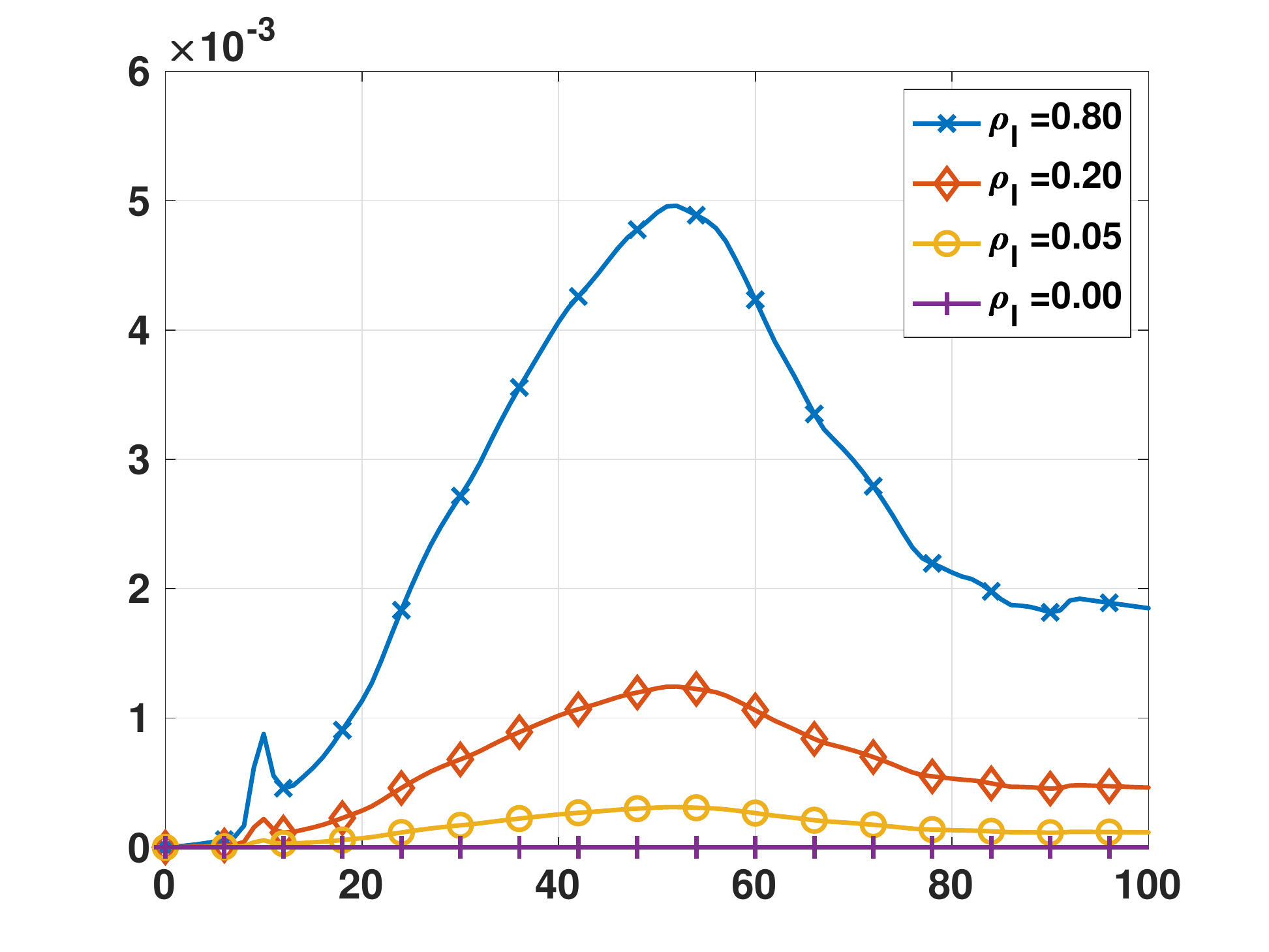}}\hfill
\caption{Three different distributions for the transmission rate, $\beta(a)$ (left), and their corresponding numerical approximation for $I(10,a)/n(10,a)$ (right); see Example \ref{ex1}.\label{fig:ex1}}
\end{figure}
We use different normal distributions for the transmission rate, $\beta(a)$, as shown in Figures \ref{fig:beta1}, \ref{fig:beta2}, \ref{fig:beta3}, for which we present the proportion of infected individuals $I(t,a)/n(t,a)$ in the population at time $t=10$, in Figures \ref{fig:ex1b1}, \ref{fig:ex1b2}, \ref{fig:ex1b3}.

In Figure \ref{fig:ex1}, we observe that for higher values of $\beta(a)$ the infected population increases.
It is clear that the infected population is zero when $\rho_I=0$. For $\rho_I=0.8$ and $\beta_1(a)$, the age distribution for the infected population has a peak close to $a=45$, and corresponds to approximately a $6.6\%$ of the total population. These are the extreme cases in this experiment. However, even if $\rho_I$ is small, for example, $\rho_I=0.05$, the number of infected individuals in the population is significant. We considered migration for a period of two years, simulating the large flow rates of immigrants due to extreme circumstances.

}
\end{example}

\begin{example} \label{ex2} 
{\rm
Consider again a susceptible population ($I_0(a)\equiv R_0(a)\equiv 0$). In this case, a proportion of immigrants come into the population as immune, that is, $\rho_R>0$. Here, we explore the possible effect of herd immunity that immigrants might provide to the population; this is, a proportion of the individuals that immigrate are immune to a particular disease. In Figure \ref{fig:ex2}, we show that there is no significant reduction in the infected population as $\rho_R$ increases.

\begin{figure}[!htb]
\centering
\subfloat[$\beta_1(a)$, $\rho_R = 0.1$.]{\includegraphics[width=0.45\textwidth]{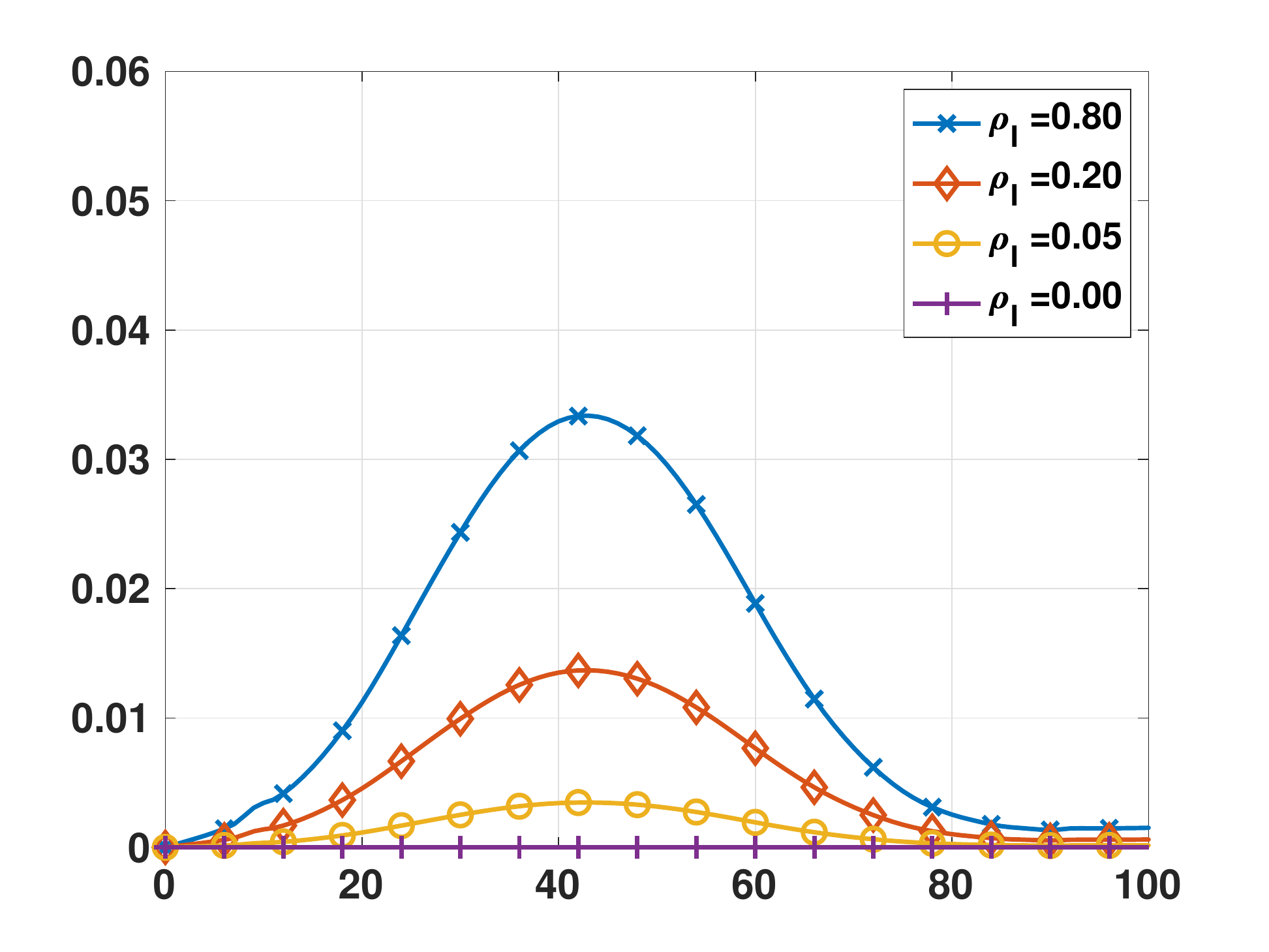}}\hfill
\subfloat[$\beta_1(a)$, $\rho_R=0.5$.]{\includegraphics[width=0.45\textwidth]{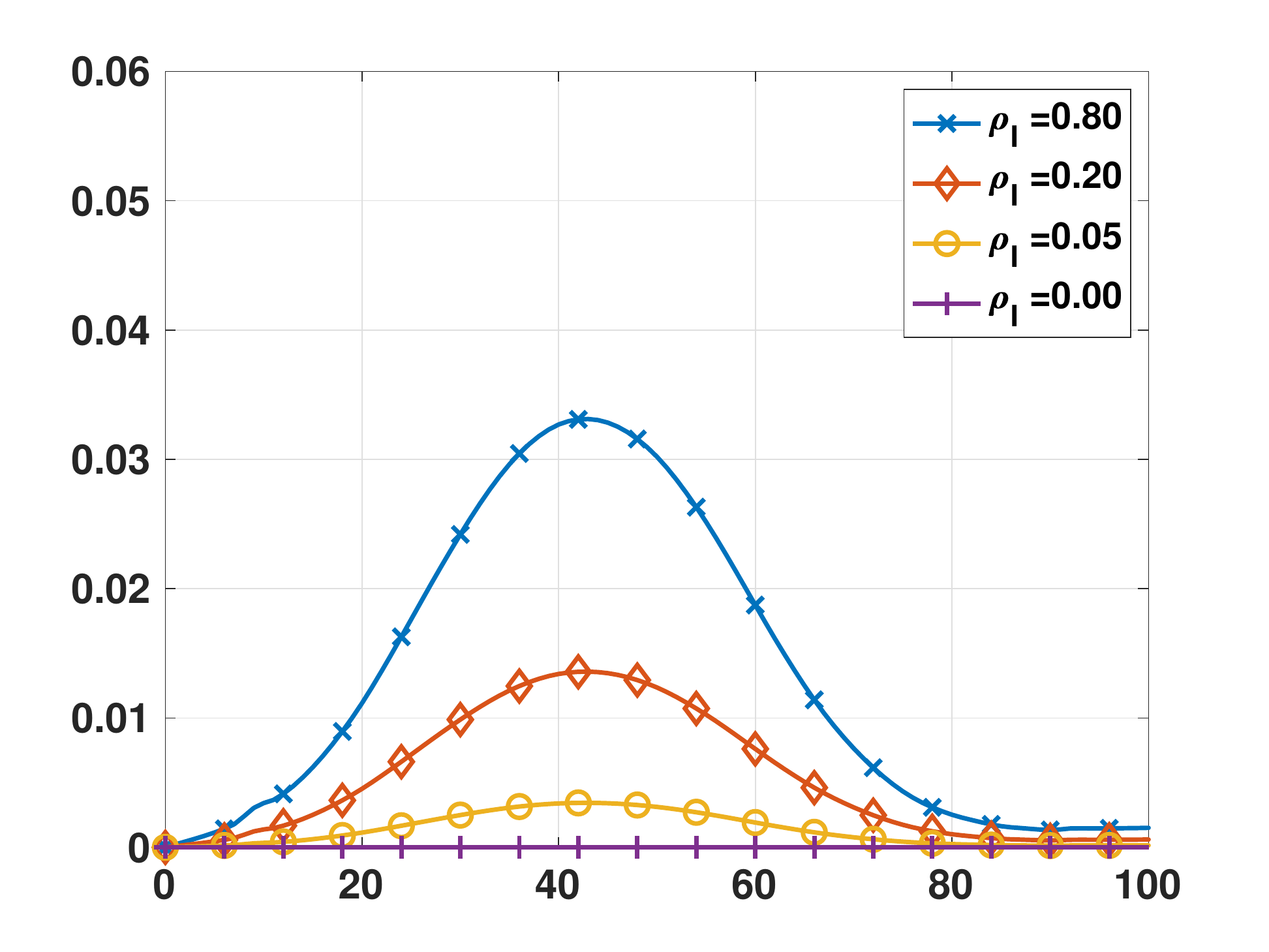}}\hfill
\subfloat[$\beta_2(a)$, $\rho_R = 0.1$.]{\includegraphics[width=0.45\textwidth]{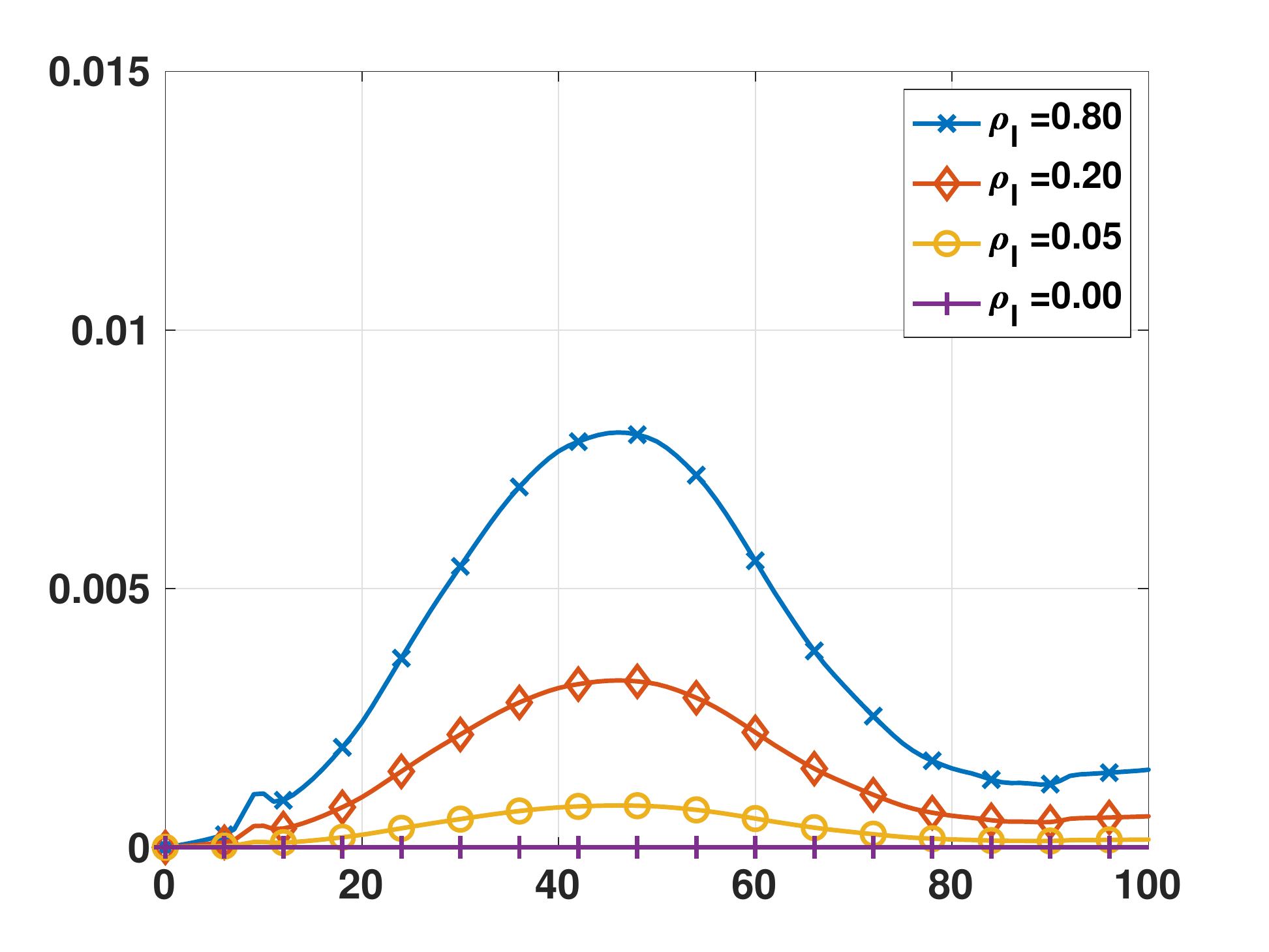}}\hfill
\subfloat[$\beta_2(a)$, $\rho_R=0.5$.]{\includegraphics[width=0.45\textwidth]{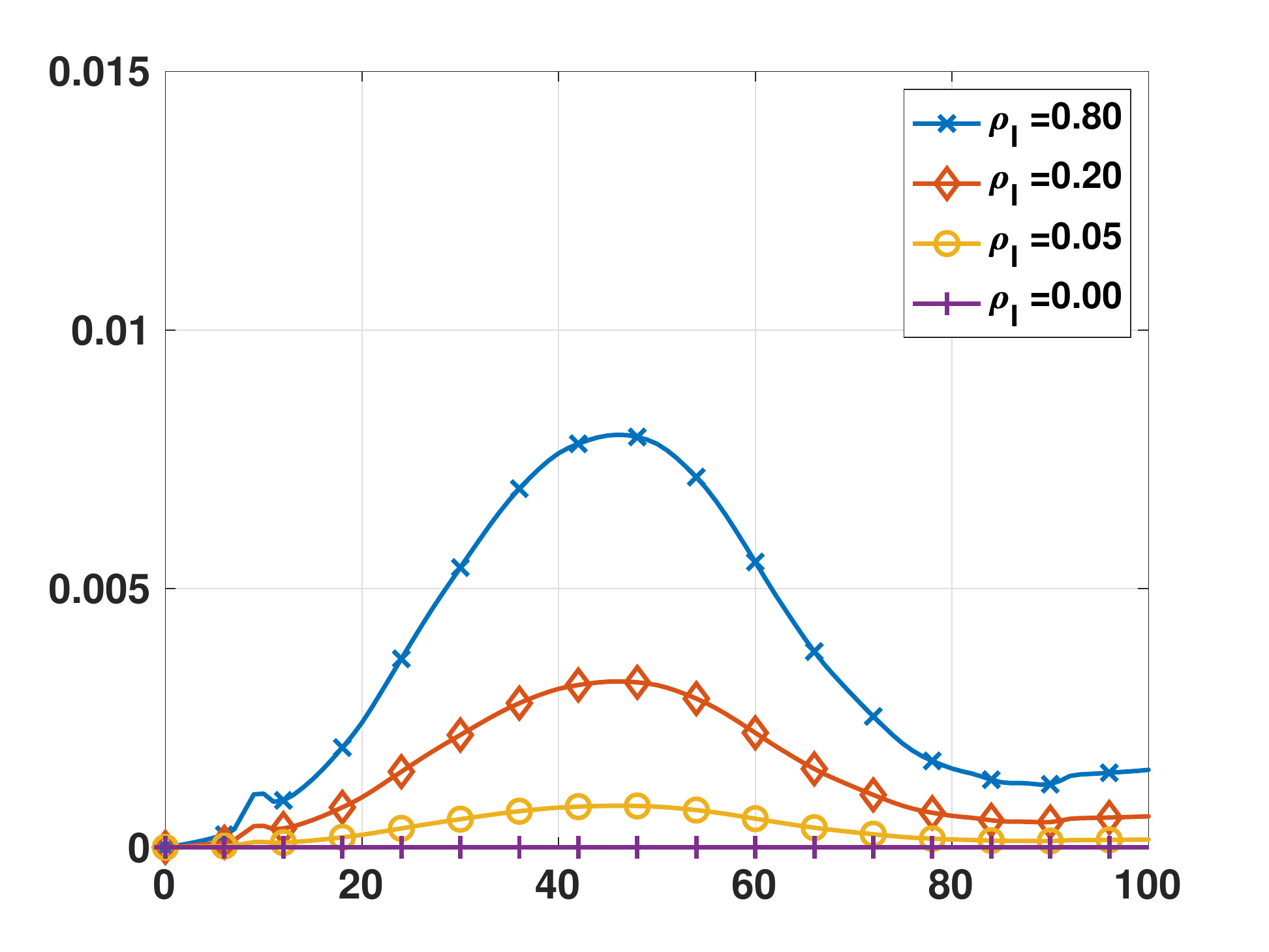}}\hfill
\subfloat[$\beta_3(a)$, $\rho_R = 0.1$.]{\includegraphics[width=0.45\textwidth]{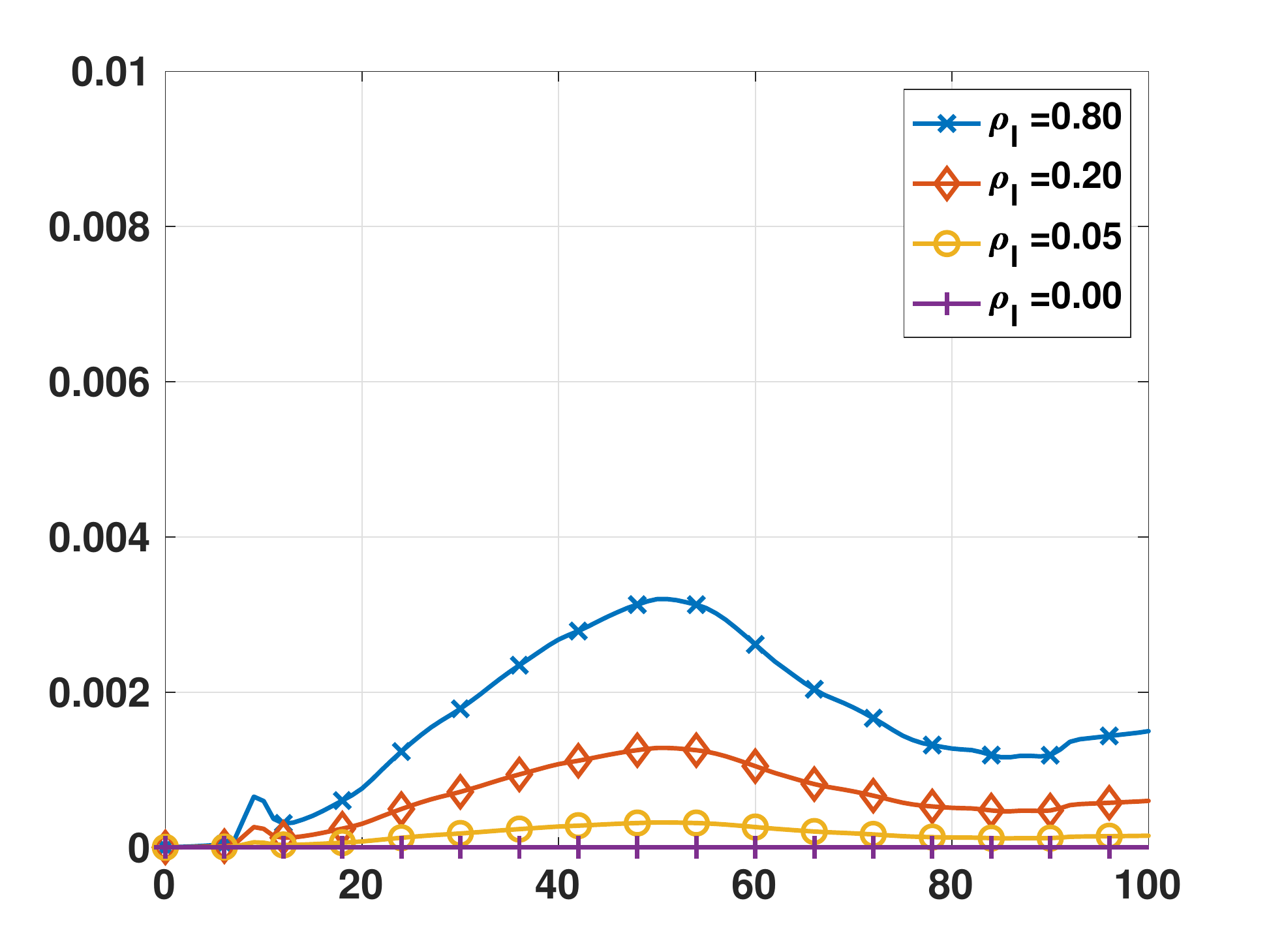}}\hfill
\subfloat[$\beta_3(a)$, $\rho_R=0.5$.]{\includegraphics[width=0.45\textwidth]{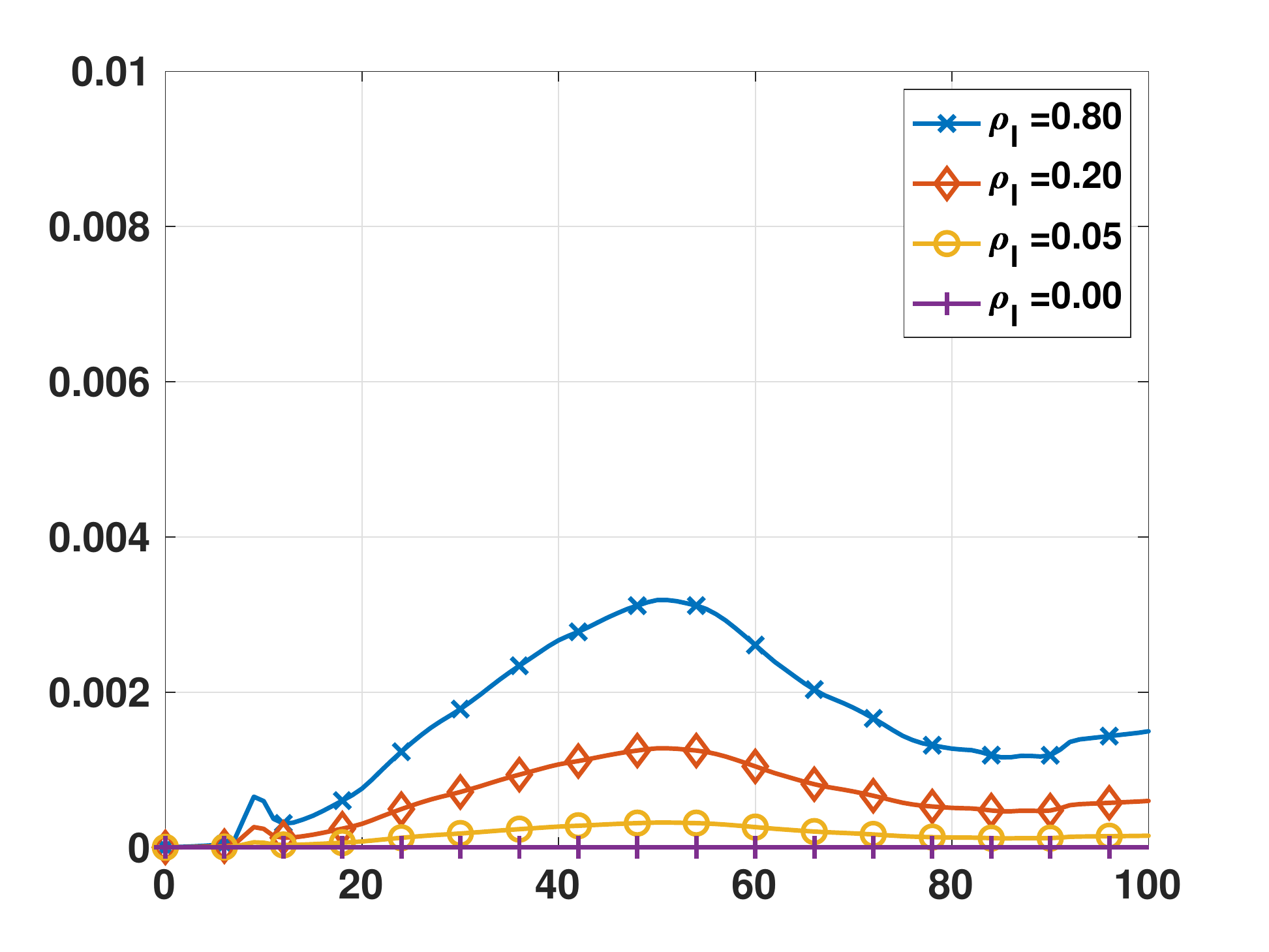}}\hfill
\caption{Values for $I(10,a)/n(10,a)$ for three different distributions of $\beta(a)$ as in Figure \ref{fig:ex1}, and $\rho_R \in \lbrace 0.1, 0.5\rbrace$; see Example \ref{ex2}.\label{fig:ex2}}
\end{figure}
}
\end{example}

\begin{example} \label{ex3}
{ \rm
Now consider an initial {\it infection-free} population with non-zero recovered class ($I_0(a)\equiv 0$, $R_0(a)\neq 0$), and infected immigrants arrive into the population. We consider $\rho_R = 0.30$, $\rho_I\in \lbrace 0, 0.10, 0.20, 0.50 \rbrace$ and different initial conditions $R_0(a) = \alpha n_0(a)$ where $\alpha \in \lbrace 0, 0.25, 0.50\rbrace$; see results in Figure \ref{fig:ex3}.

\begin{figure}[!htb]
\centering
\subfloat[${I(10,a)}/{n(10,a)}$ for $\beta_1(a)$, $\alpha=0$.\label{fig:ex3_1}]{\includegraphics[width=0.3\textwidth]{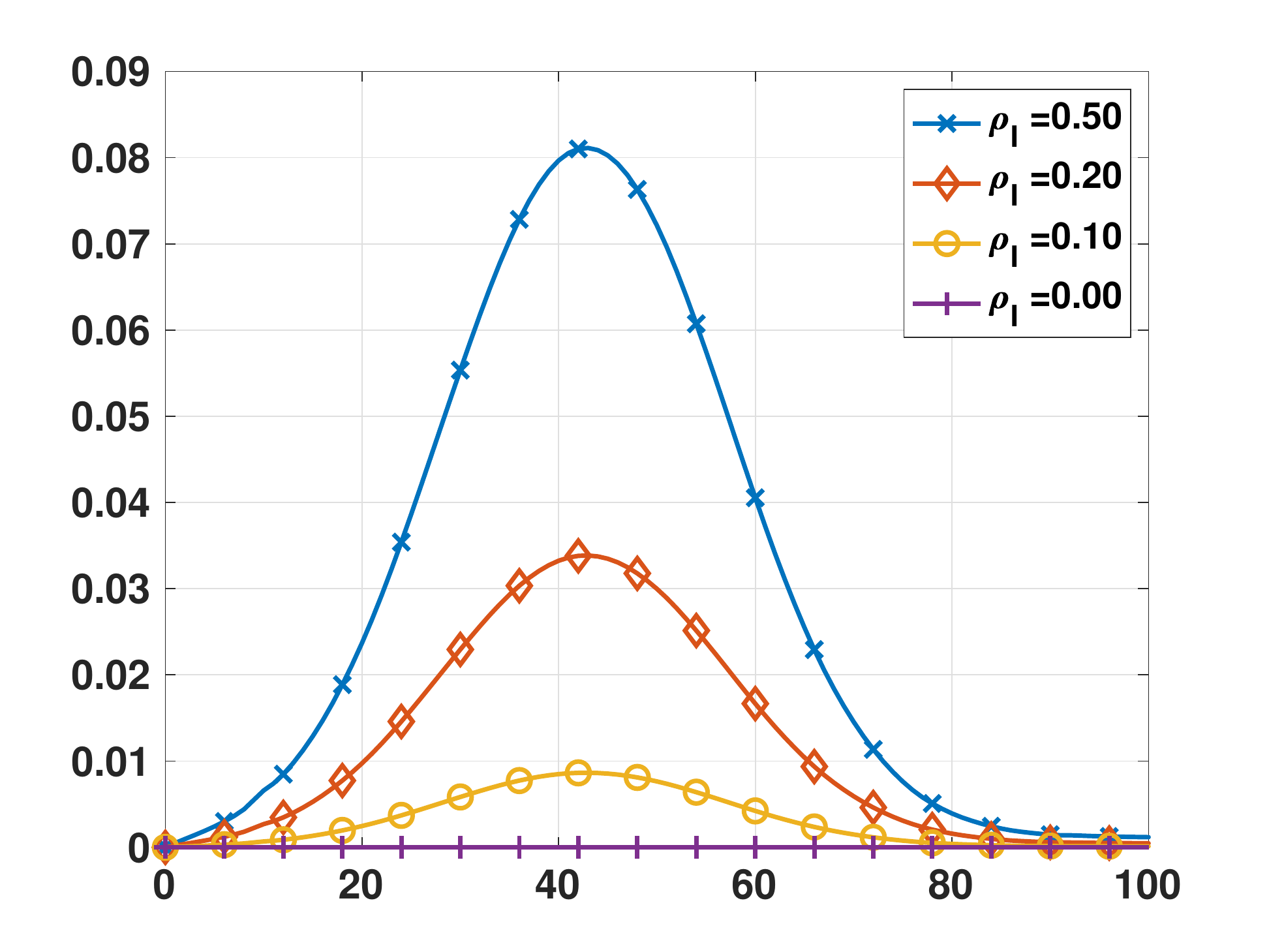}}\hfill
\subfloat[${I(10,a)}/{n(10,a)}$ for $\beta_1(a)$, $\alpha=0.25$.\label{fig:ex3_2}]{\includegraphics[width=0.3\textwidth]{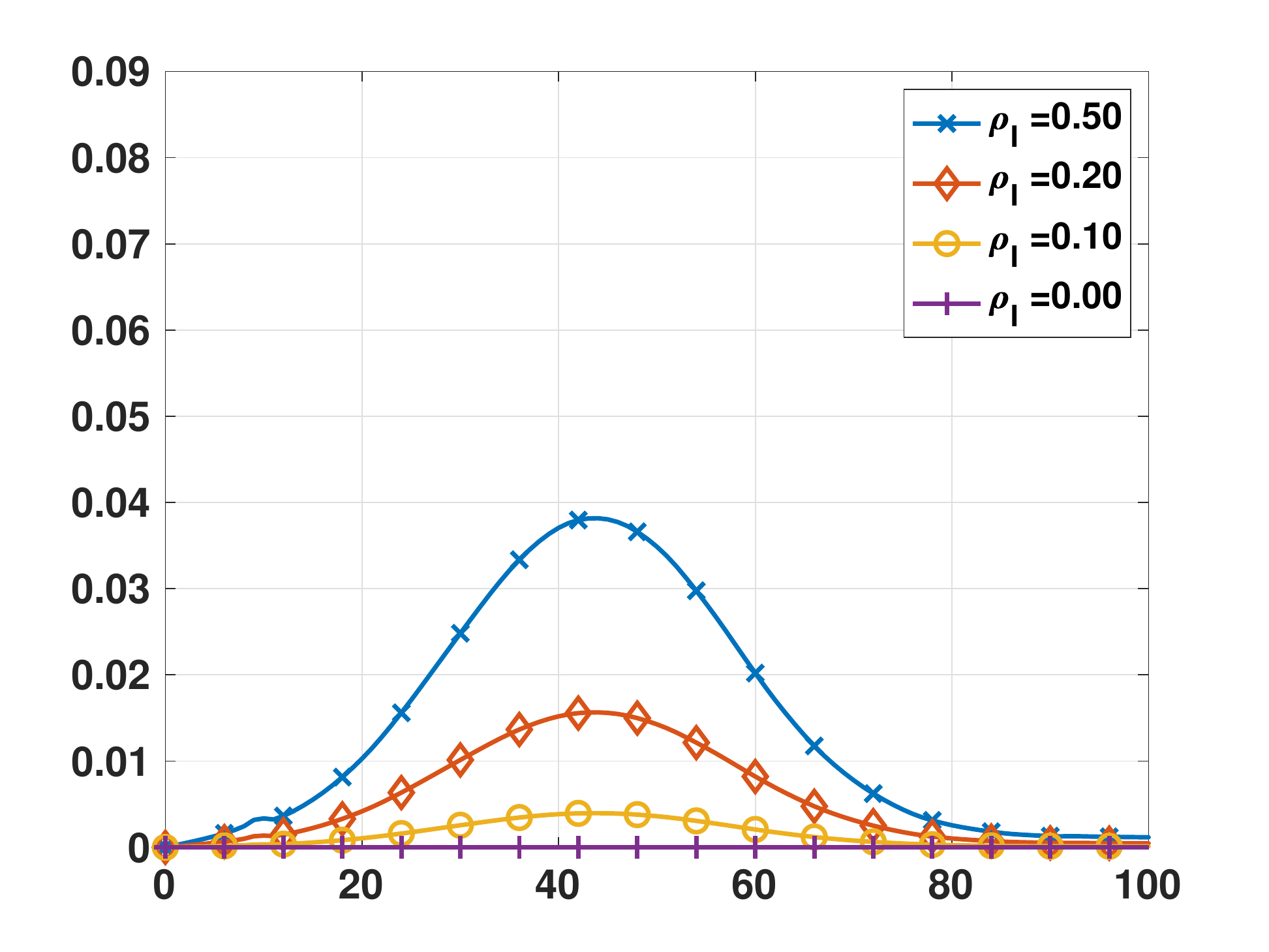}}\hfill
\subfloat[${I(10,a)}/{n(10,a)}$ for $\beta_1(a)$, $\alpha=0.50$.\label{fig:ex3_3}]{\includegraphics[width=0.3\textwidth]{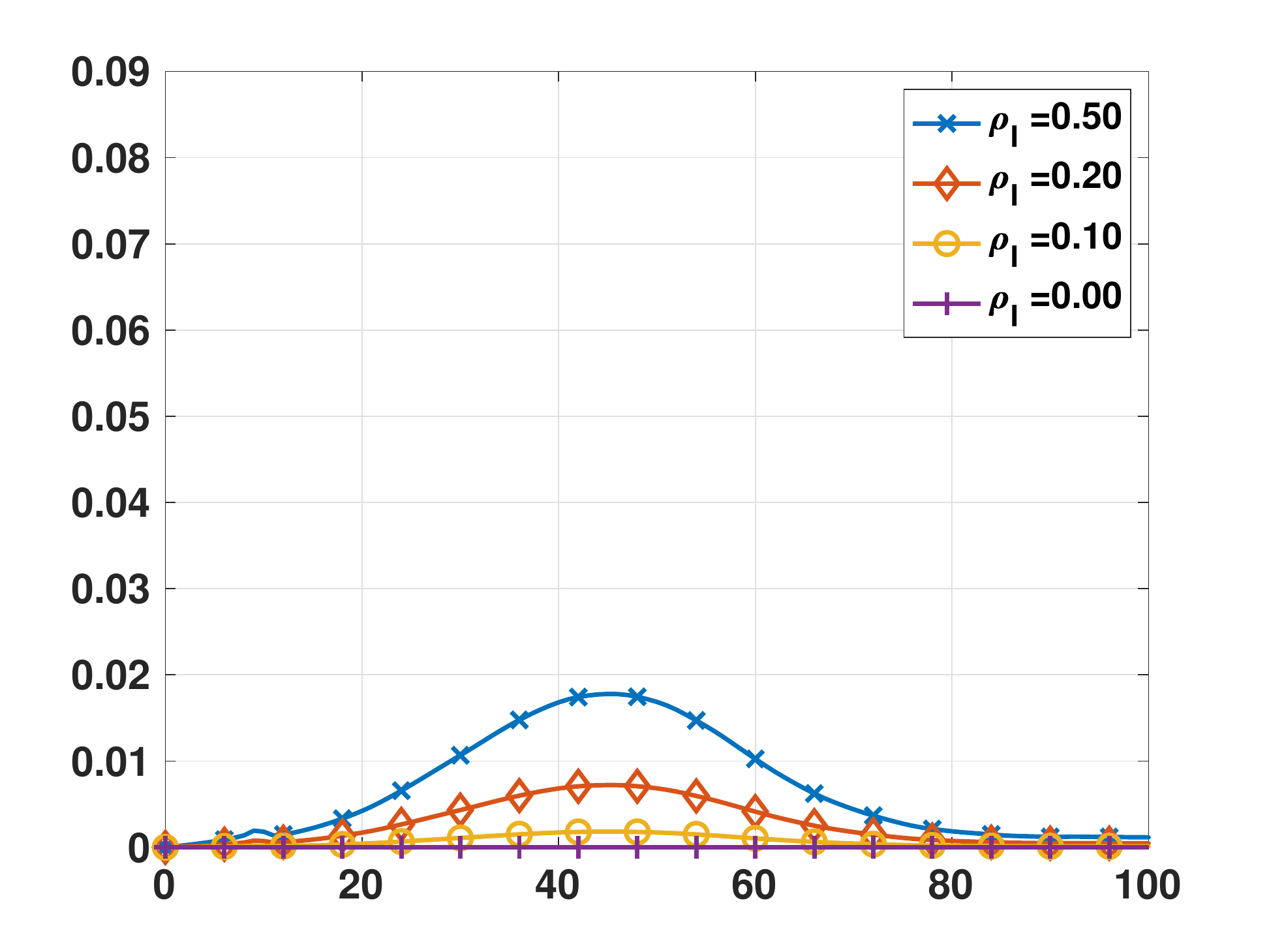}}\hfill
\caption{Values for $I(10,a)/n(10,a)$ for $\beta_1(a)$ as in Figure \ref{fig:beta1}; see Example \ref{ex3}.\label{fig:ex3}}
\end{figure}

When there is immunity of a disease in the local population, the incoming immigrant population benefits from herd immunity if a disease breaks out within the local community. In fact, both populations benefit from the effect of herd immunity. As seen in Figure \ref{fig:ex3}, the proportion of infected individuals reaches $8\%$ when the local population has no immunity. However, as the level of immunity increases the infected population has a significant decrease.
}
\end{example} 
 
\section{Conclusions} \label{sec:conc}
In this study, we focused on the overall population changes when {\it short-term immigration} is a possibility. We considered an age-structured epidemic model with {\it short-term immigration}. The {\it immigration-free} and {\it infection-free} steady state distributions 
are stable
when $\mathcal{R}_0<1$; see Section 4. The existence of the endemic non-uniform steady state distribution is guaranteed when $\mathcal{R}_0>1$ (see \cite[Theorem 4.1]{sanchez2018x}). Age-dependent parameter distributions were based on Mexico-USA immigration data \cite{mpi}. 

Numerical experiments were conducted to illustrate the distinct immigration scenarios where individuals may come as susceptible, infected or immune. Typically, $\mathcal{R}_0<1$ is a sufficient condition for a \lq\lq disease" to die out. However, when infected individuals come into a mostly susceptible population via immigration the disease becomes endemic and the incorporation of reactive strategies presents an enormous challenge to public health officials. Even for small values of $\rho_I$, the proportion of infected immigrants, the population suffers from the effect of having a new disease invade or exacerbate an existing situation for a disease which is already present. 

We explored three different scenarios where we looked at population immunity levels, both immigrant and local, and the effects it can have in the overall disease dynamics. In the first example, a proportion of the immigrant population arrives infected. No immune immigrants come into the local population. For different transmission distributions, $\beta(a)$, we looked at the effect of infected individuals coming into a susceptible population with no immunity. In Figure \ref{fig:beta1}, we see that the proportion of infected individuals reaches approximately $7\%$ when the proportion of infected immigrants coming into the local population is high ($\rho_I=0.80$). For a lower transmission rate distribution (see Figure \ref{fig:beta2}), the number of infected individuals in the population approaches $1.5\%$ when $\rho_I=0.80$. This implies that when the number of infected immigrants is high, if there are control measures that can lower the disease transmission, then the health effects that the immigrant population may have in the local presumed susceptible individuals is minimal. In contrast, in Figure \ref{fig:ex2} we have infected immigrants coming into the population, as well as immune individuals. This creates some level of herd immunity where the local population is partially protected. Even with low levels of immunity ($\rho_R=0.10$), the infected population rises slightly above $3\%$, which is the same for $\rho_R=0.50$. In Figure \ref{fig:ex3_2}, we have some level of immunity in the local population ($25\%$), as well as immunity in the immigrant population ($\rho_R=0.30$), In this example, the infected population reaches approximately $4\%$, which highlights the importance of prevention/control strategies of the incoming immigrant population to alleviate the possible harmful health effects in the local population.

When infected and immune individuals enter the new population, our simulations show that immune individuals have a modest effect on the overall dynamics of the model. This showcases the importance of prevention strategies when these episodes are imminent. 

Although the constant flow of immigrants is normal around the world, the prospect of immigrant populations is ever changing. However, the recent events in Europe, Latin America and the Caribbean have forced massive population movement in some areas. This has caused disturbances in local populations and a likely cause of concern if some of these populations may carry infectious agents that the local public health authorities are unaware of. 

Other scenarios where individuals can migrate with less strenuous conditions may impact not only disease dynamics but the complexion of a community. The case of Puerto Ricans migrating to the United States after the devastation of hurricane Maria in 2017 (where 14\% is estimated to have migrated to the mainland \cite{prcrisis}), the majority establishing their new home in Florida, is an example. The fact that Puerto Ricans are United States citizens, migration to the United States, albeit not trivial due to economic and family factors, it is a more viable alternative than other populations looking to migrate \cite{HondCaravan,Nica,Vene,Vene2,eurocrisis}. 

Albeit the record number of immigrants being detained in 2017 \cite{FU-USA}, individuals are migrating in record numbers and taking risks to seek a better life and opportunities for their families. Our model highlights the importance of the preparation of public health officials and local government agencies, that may require to look to and adapt under special circumstances such as {\it short-term immigration}. Due to limited resources, the availability of personnel to accommodate foreign individuals even for short periods is likely to cause local disruption.

The collective behaviour of individuals migrating can have lasting effects in a population, both positive and negative. It has been shown that most immigrants contribute positively to a population. However, in the short-term, if these populations carry infectious pathogens, it may be a public health threat for the local population.

\section{Acknowledgements}
The authors would like to thank the Research Center in Pure and Applied Mathematics and the Mathematics Department at Universidad de Costa Rica for their support during the preparation of this manuscript. 

\FloatBarrier

\end{document}